\newcommand{\dens}{\texttt{DENSITY}\xspace}
\DeclareMathOperator*{\argmin}{arg\,min}
\DeclareMathOperator*{\argmax}{arg\,max}
\DeclareMathOperator{\supp}{supp}
\newcommand{\ot}{\leftarrow}
  \providecommand\BibTeX{{%
    \normalfont B\kern-0.5em{\scshape i\kern-0.25em b}\kern-0.8em\TeX}}}
\begin{document}

\title{Stochastic Solutions for Dense Subgraph Discovery\\ in Multilayer Networks}  

%
\author{Yasushi Kawase}
\affiliation{%
  \institution{The University of Tokyo}
  \streetaddress{Hongo 7-3-1}
  \state{Tokyo}
  \country{Japan}
  \postcode{113-8654}
}
\email{kawase@mist.i.u-tokyo.ac.jp}
\author{Atsushi Miyauchi}
\affiliation{%
  \institution{The University of Tokyo}
  \streetaddress{Hongo 7-3-1}
  \state{Tokyo}
  \country{Japan}
  \postcode{113-8654}
}
\email{miyauchi@mist.i.u-tokyo.ac.jp}
\author{Hanna Sumita}
\affiliation{%
  \institution{Tokyo Institute of Technology}
  \streetaddress{Oookayama 2-12-1}
  \state{Tokyo}
  \country{Japan}}
\email{sumita@c.titech.ac.jp}


\begin{abstract}  
  Network analysis has played a key role in knowledge discovery and data mining.
  In many real-world applications in recent years, we are interested in mining \emph{multilayer networks},
  where we have a number of edge sets called \emph{layers},
  which encode different types of connections and/or time-dependent connections over the same set of vertices.
  Among many network analysis techniques, dense subgraph discovery, aiming to find a dense component in a network, is an essential primitive with a variety of applications in diverse domains.
  In this paper, we introduce a novel optimization model for dense subgraph discovery in multilayer networks.
  Our model aims to find a stochastic solution, i.e., a probability distribution over the family of vertex subsets,
  rather than a single vertex subset, whereas it can also be used for obtaining a single vertex subset.
  For our model, we design an LP-based polynomial-time exact algorithm.
  Moreover, to handle large-scale networks, we also devise a simple, scalable preprocessing algorithm, which often reduces the size of the input networks significantly and results in a substantial speed-up.
  Computational experiments demonstrate the validity of our model and the effectiveness of our algorithms.
\end{abstract}

\begin{CCSXML}
  <ccs2012>
  <concept>
  <concept_id>10003752.10003809.10003635</concept_id>
  <concept_desc>Theory of computation~Graph algorithms analysis</concept_desc>
  <concept_significance>500</concept_significance>
  </concept>
  <concept>
  <concept_id>10003752.10003809.10003716</concept_id>
  <concept_desc>Theory of computation~Mathematical optimization</concept_desc>
  <concept_significance>500</concept_significance>
  </concept>
  </ccs2012>
\end{CCSXML}

\ccsdesc[500]{Theory of computation~Graph algorithms analysis}
\ccsdesc[500]{Theory of computation~Mathematical optimization}

\keywords{network analysis, multilayer networks, dense subgraph discovery, stochastic solutions}  %

\maketitle

\section{Introduction}\label{sec:introduction}
Network analysis has played a key role in knowledge discovery and data mining. 
In many real-world applications in recent years, we are interested in mining \emph{multilayer networks} rather than ordinary (i.e., single-layer) networks, where we have a number of edge sets called \emph{layers}, which encode different types of connections and/or time-dependent connections over the same set of vertices~\cite{Boccaletti+14,DeDomenico+13,Kivela+16}.
For example, in the Twitter network, there are various layers representing different types of connections, e.g., follower--followee relations, retweets, and mentions, among users.
Moreover, each of those connections is time-dependent and therefore leads to multiple layers by itself.
As another example, consider brain networks arising in neuroscience, where vertices correspond to small regions of a brain~\cite{Friston11}.
In this network, we can obtain at least two layers representing the structural connectivity and the functional connectivity (e.g., co-activation) among the small pieces of a brain.

Among many network analysis techniques, dense subgraph discovery, aiming to find a dense component in a network, is an essential primitive with a variety of applications in diverse domains~\cite{Gionis_Tsourakakis_15,Lee+10}.
Examples include
detecting communities and spam link farms in Web graphs~\cite{Dourisboure+_07,Gibson+_05},
experts extraction in crowdsourcing systems~\cite{Kawase+19},
real-time story identification in microblogging streams~\cite{Angel+_12},
and extracting molecular complexes in protein--protein interaction networks~\cite{Bader_Hogue_03}.

\begin{figure}[t]
  \centering
  \begin{minipage}{0.67\linewidth}
    \includegraphics[scale=.18]{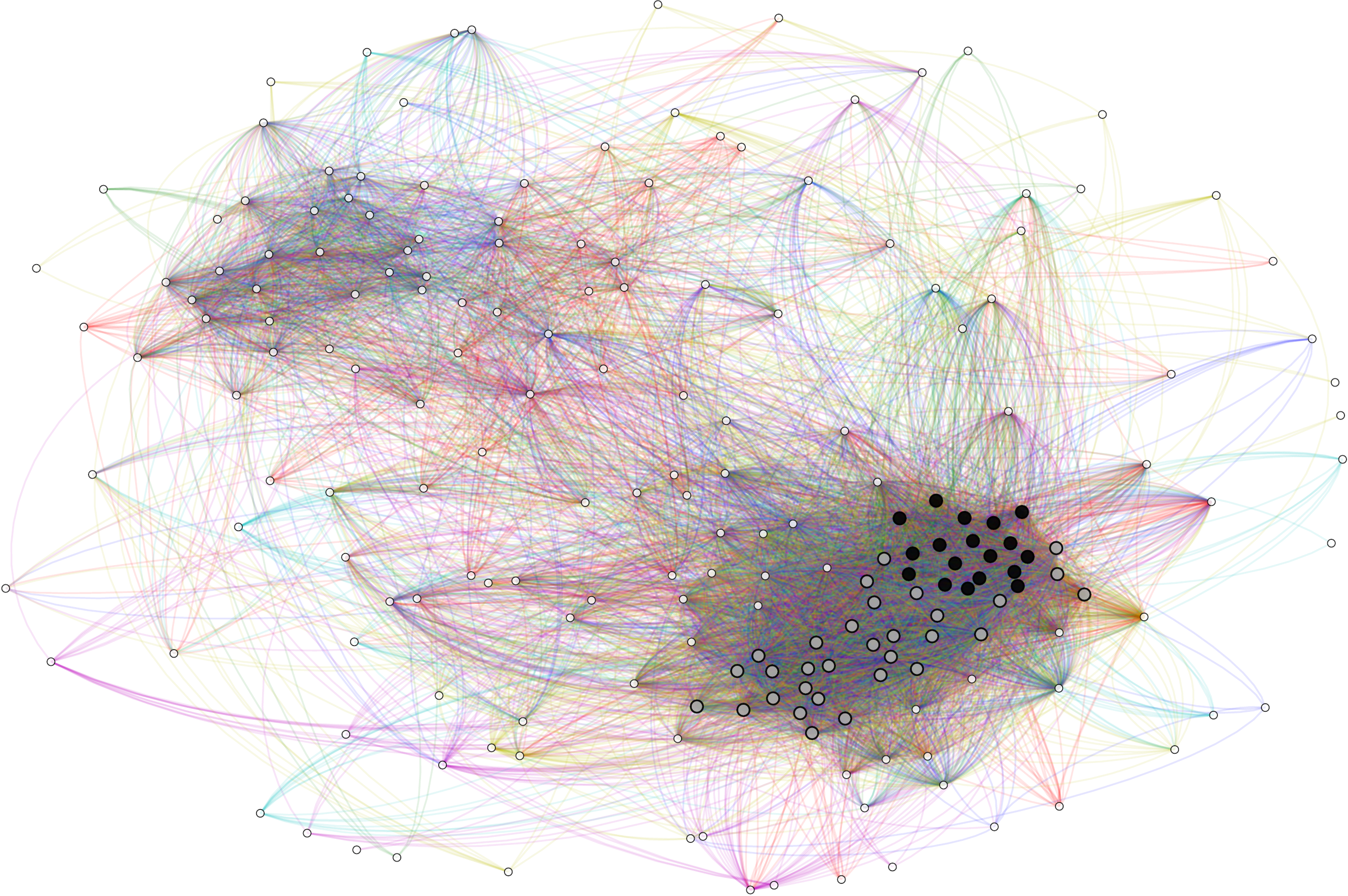}
  \end{minipage}
  \begin{minipage}{0.32\linewidth}
    \includegraphics[scale=.044]{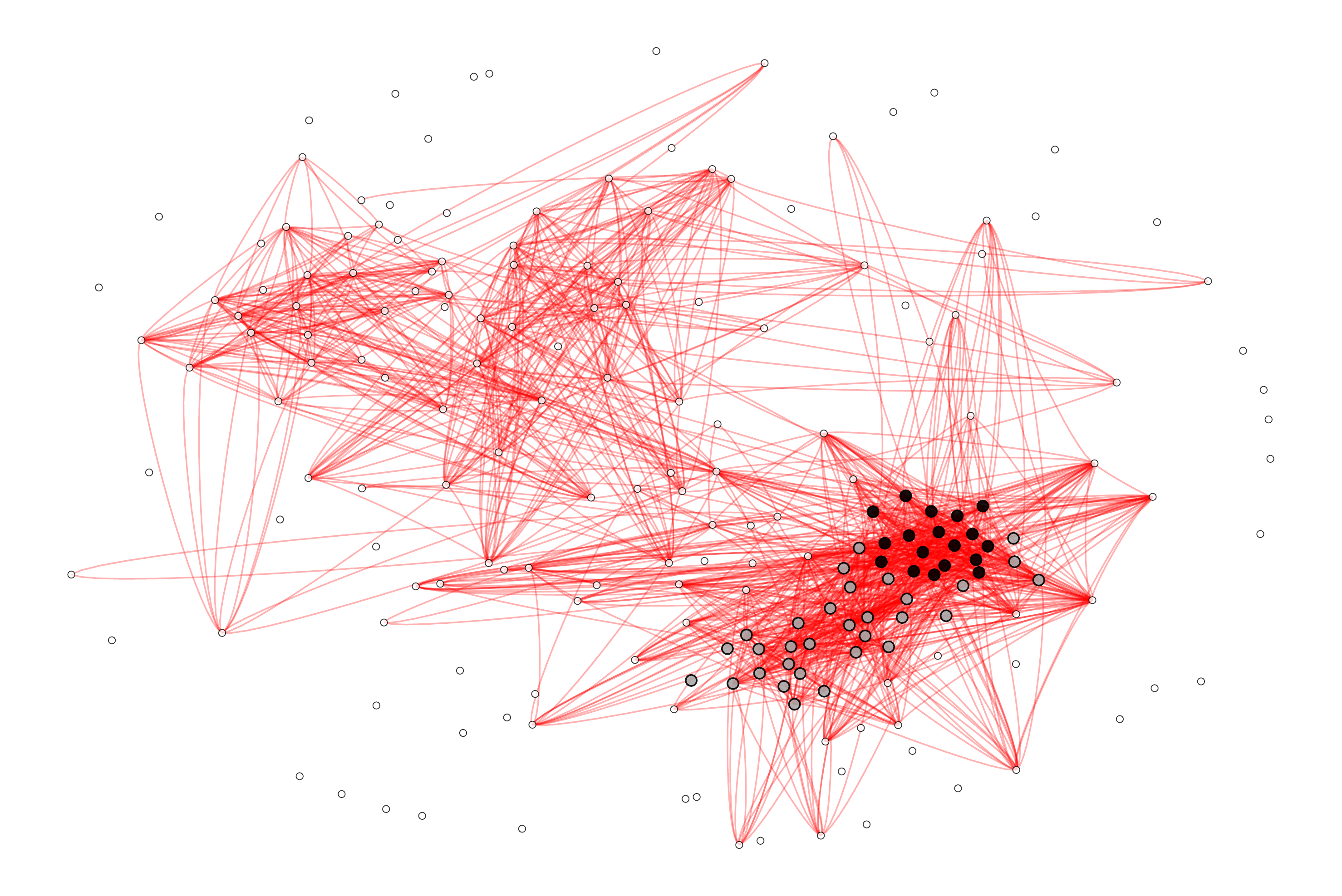}
    \includegraphics[scale=.044]{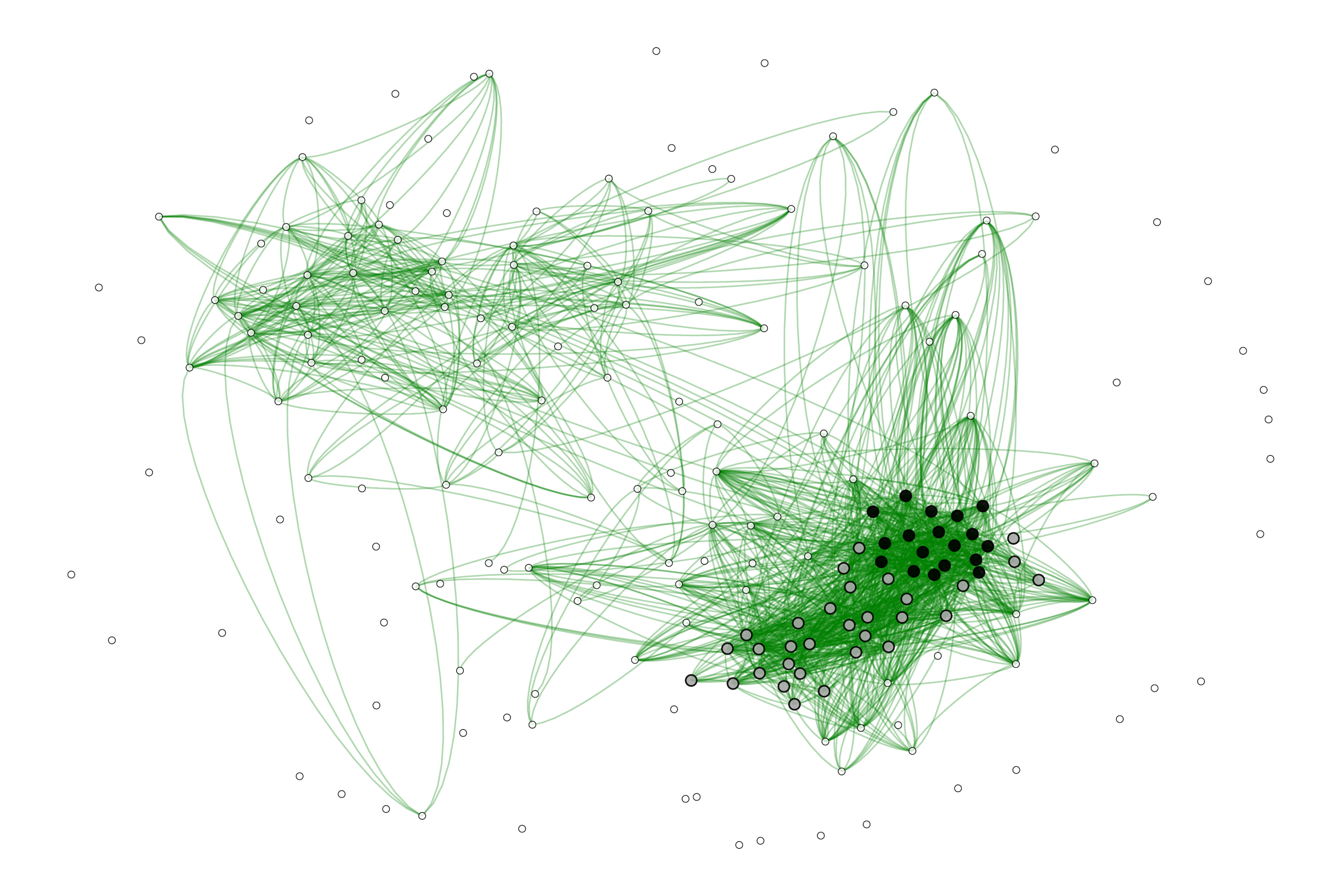}\\
    \includegraphics[scale=.044]{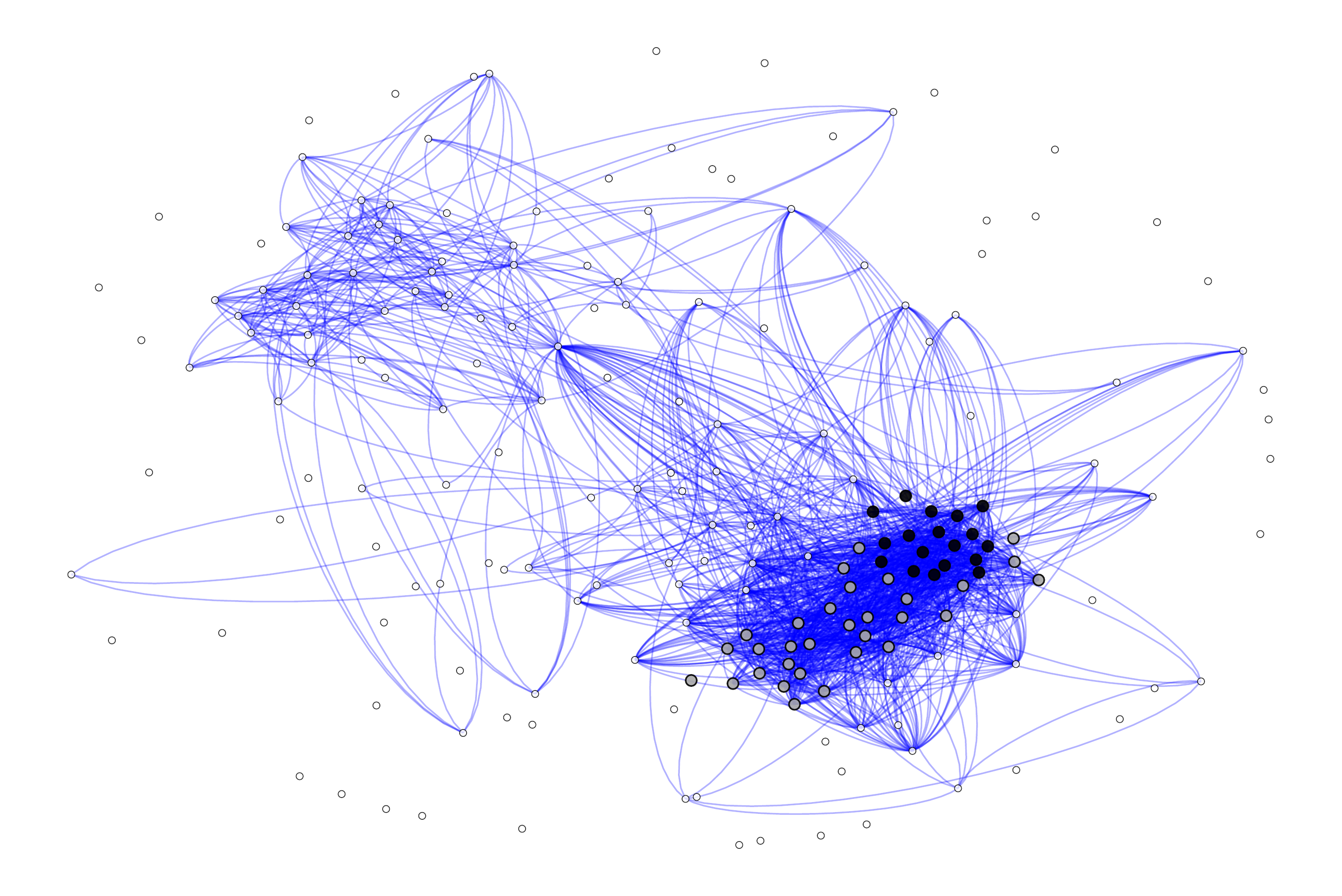}
    \includegraphics[scale=.044]{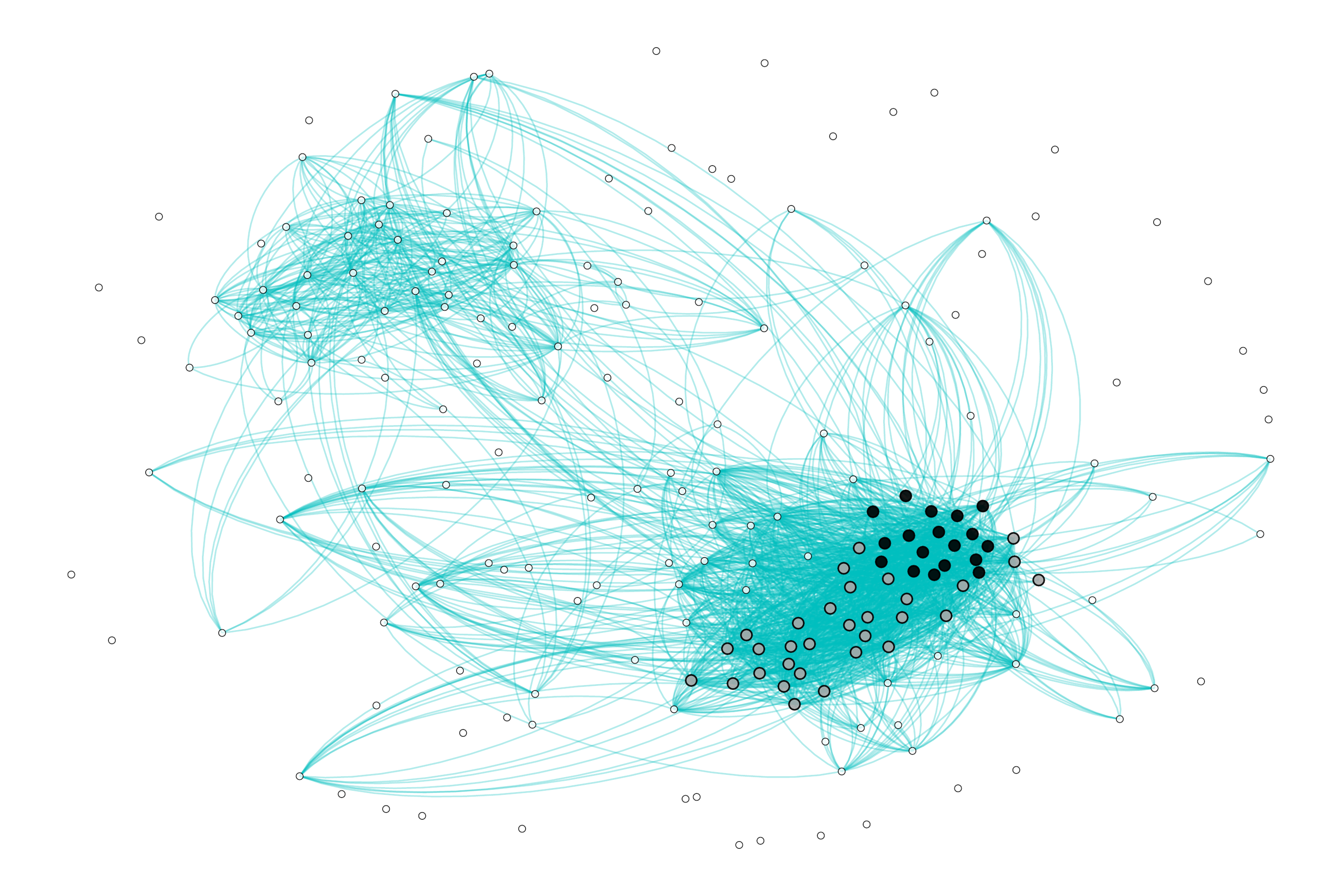}\\
    \includegraphics[scale=.044]{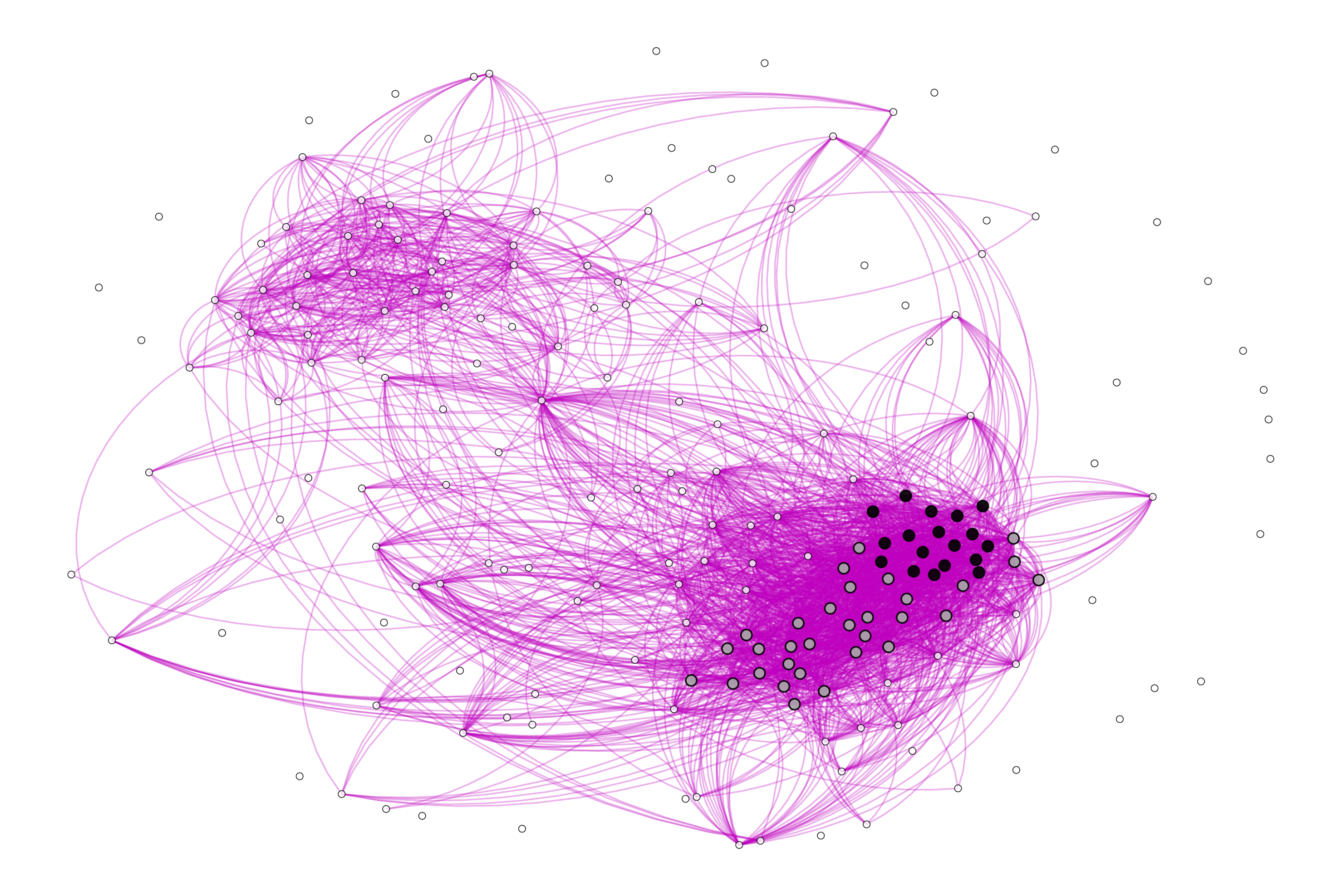}
    \includegraphics[scale=.044]{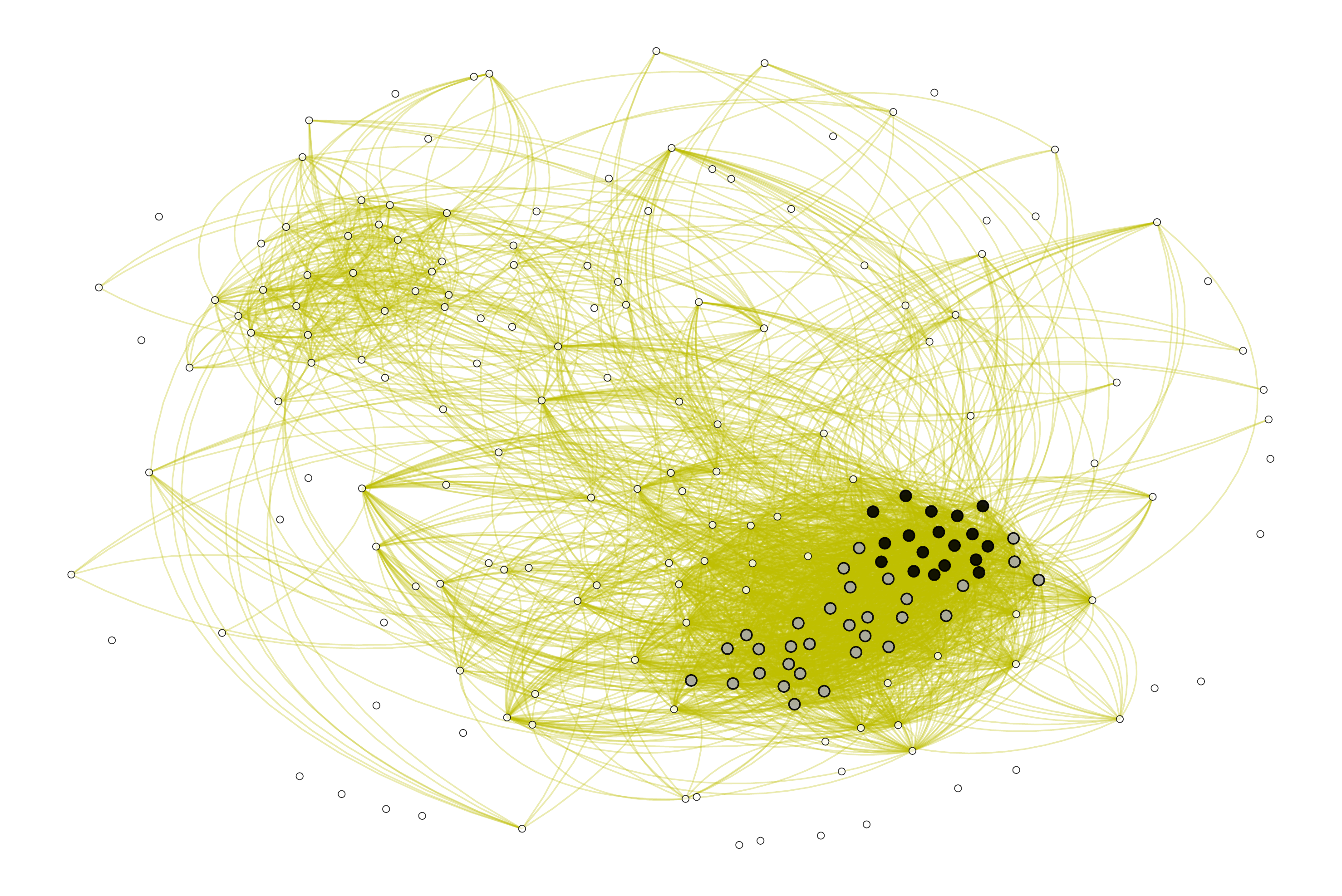}
  \end{minipage}
  \caption{An optimal solution to our model with the regret metric for the real-world multilayer network called WILDBIRDS: selecting the black vertices with probability 0.29 and the union of the gray and black vertices with probability 0.71. There are six layers, and the color of each edge represents the layer containing the edge.}
  \label{fig:result_example}
\end{figure}

Recently, dense subgraph discovery has been extended from single-layer networks to multilayer networks.
Jethava and Beerenwinkel~\cite{JB2015} introduced the \emph{densest common subgraph problem},
where given a multilayer network, we are asked to find a vertex subset
that maximizes the minimum \emph{degree density} (defined later) over the layers.
They mentioned some concrete applications of this problem in biological networks.
Moreover, we can see that the densest common subgraph problem may appear in the context of robust optimization.
Indeed, multilayer networks can be seen as a model of single-layer networks with uncertain edges with a number of scenarios,
where each layer corresponds to one scenario.
If we can find a subgraph that is reasonably dense for all layers,
the solution is more robust than that obtained by single-layer network analysis,
which may be arbitrarily bad for some layers.
%
%

\subsection{Our contribution}
In this paper, we introduce a novel optimization model for dense subgraph discovery in multilayer networks.
Given an edge-weighted multilayer network, our model aims to find a vertex subset that is dense for the layer selected by an adversary.
Specifically, we employ a stochastic solution, i.e., a probability distribution over the family of vertex subsets, while existing work focuses only on a deterministic solution, i.e., a single vertex subset.
More precisely, we stochastically choose a vertex subset according to a stochastic solution and then,
knowing only the stochastic solution but not a realization, the adversary selects the worst layer for us.
Note that the worst layer with respect to each realization of the stochastic solution may be different from that with respect to the stochastic solution.
It is worth mentioning that our model can be seen as a zero-sum two-player Stackelberg game~\cite{BO1999}, where the leader is our algorithm and the follower is the adversary.

We measure the density of a vertex subset using the quality function called the \emph{degree density}
(or simply \emph{density}) similarly to Jethava and Beerenwinkel~\cite{JB2015}.
The degree density of a vertex subset (in a single-layer network) is defined as half the average degree of the subgraph induced by the subset.
This quality function is often used in the literature of dense subgraph discovery; in fact, this is the objective function of the well-known \emph{densest subgraph problem} (see Section~\ref{sec:related}).

We evaluate the performance of a stochastic solution (or an algorithm)
using the following three metrics for the layer selected by the adversary:
(i) the \emph{density}, i.e., the expected degree density,
(ii) the \emph{robust ratio}, i.e., the ratio of the expected density to the optimal density, and
(iii) the \emph{regret}, i.e., the difference between the optimal density and the expected density.
Therefore, we address three optimization problems according to the metrics.
Our main algorithmic contribution is to design a polynomial-time exact algorithm for our optimization model,
which is applicable to all of the above three metrics.
Specifically, the algorithm first solves an LP, which is a generalization of the LP used for the densest common subgraph problem~\cite{JB2015}
and then computes an optimal probability distribution based on the LP's optimal solution.
%

We observe that the output of our algorithm has a useful structure;
the family of the vertex subsets with positive probabilities has a hierarchical structure.
This leads to several practical benefits, e.g., the largest size subset contains all the other subsets
and the optimal solution obtained by our algorithm has support size at most the number of vertices (although a solution of our optimization model may have support size exponential in the number of vertices).
Moreover, we can demonstrate that the support size of the solution obtained is upper bounded by the number of layers.
It is desirable to have a small support size
for the understandability of stochastic solutions and for purposes of simple verification and validation.

For practical use of our proposed algorithm,
we wish to speed up our algorithm
so that it is applicable to larger-scale multilayer networks.
The bottleneck is the computation cost of the LP, which has a lot of variables and constraints.
To this end, we devise a simple, scalable preprocessing algorithm, which often reduces the size of the input networks significantly and results in a substantial speed-up.
Specifically, the algorithm first computes an approximate solution by solving a much smaller LP than the aforementioned LP
and then removes vertices from the original network using the information of the approximate solution obtained.
Note that our preprocessing algorithm is a generalization of that proposed by Balalau et al.~\cite{Balalau+15} for the densest subgraph problem.
However, to verify that the preprocessing algorithm does not harm any optimal solution, we need a more sophisticated analysis, which is totally different from theirs.

Finally, we conduct thorough computational experiments using synthetic graphs and real-world networks to verify the validity of our model and to evaluate the performance of our algorithms.
Figure~\ref{fig:result_example} illustrates an optimal solution to our model with the regret metric for the real-world multilayer network called WILDBIRDS (see Section~\ref{sec:experiments} for its detailed characteristics).
The optimal solution obtained is reasonably dense for all layers, as desired.

It should be remarked that stochastic solutions can also be used for obtaining a single vertex subset.
For example, the following three rules are reasonable to select a vertex subset from a stochastic solution:
(i) stochastically select a subset following the probability distribution,
(ii) select a subset with the highest probability,
and
(iii) select a subset with non-zero probability that optimizes some metric at hand (e.g., the minimum density, minimum robust ratio, and maximum regret, over layers).
In our experiments, we compare the vertex subsets detected using the above rules with those obtained by existing algorithms for the densest common subgraph problem,
in terms of various evaluation metrics.

\section{Related Work}\label{sec:related}
The densest subgraph problem is one of the most popular optimization models for dense subgraph discovery.
Let $G=(V,E)$ be an undirected graph and $w\colon E\to\mathbb{R}_{++}$ a positive edge weight.
For a vertex subset $S\subseteq V$, the subgraph induced by $S$ is denoted by $G[S]\coloneqq(S,E[S])$, where $E[S]\coloneqq \bigl\{\{u,v\}\in E\mid u,v\in S\bigr\}$.
In addition, we denote by $w(S)$ the total weight of the edges in $S$, i.e., $w(S)=\sum_{e\in E[S]}w(e)$.
For a nonempty $S\subseteq V$, the \emph{degree density} (or simply called \emph{density}) of $S$ is defined as $w(S)/|S|$ (where we define the density of the empty set (i.e., $0/0$) to be $0$).
In the densest subgraph problem, given a graph $G=(V,E)$ with an edge weight $w$,
we are asked to find $S\subseteq V$ that maximizes the density $w(S)/|S|$.

It is well known that the densest subgraph problem can be solved exactly in polynomial time using a maximum-flow-based algorithm~\cite{Goldberg_84} or an LP-based algorithm~\cite{Charikar2000}.
Moreover, it was shown that a simple greedy algorithm called the \emph{greedy peeling} admits $2$-approximation in $O(m+n\log n)$ time~\cite{Charikar2000,Kortsarz+94}. 
Recently, Boob et al.~\cite{Boob+20} designed an iterative greedy peeling algorithm, and demonstrated empirically that the output tends to be nearly optimal. 
Later, Chekuri, Quanrud, and Torres~\cite{Chekuri+22} proved the convergence to optimality of this algorithm (in a more general context). 

Balalau et al.~\cite{Balalau+15} introduced a simple preprocessing algorithm to improve the scalability of (exact) algorithms for the densest subgraph problem.
Their preprocessing algorithm first computes an approximate solution $S\subseteq V$ (using the greedy peeling),
and then iteratively removes a vertex with the (weighted) degree less than the objective value of the approximate solution obtained.
The validity of this preprocessing algorithm is guaranteed by the fact that any vertex with the (weighted) degree less than the optimal value is not contained in any optimal solution~\cite{Balalau+15}.

For the densest common subgraph problem, Jethava and Beerenwinkel~\cite{JB2015} devised an LP-based polynomial-time heuristic and a $2k$-approximation algorithm based on the greedy peeling, where $k$ is the number of layers.
Later, Charikar, Naamad, and Wu~\cite{charikar+18} designed two polynomial-time algorithms with approximation ratios $O(\sqrt{|V|\log k})$ and $O(|V|^{2/3})$ (irrespective of $k$), respectively.
Moreover, they showed some strong inapproximability results for the problem, based on some reasonable computational complexity assumptions.
Thus, it is very unlikely that a well-approximate solution can be found in polynomial time.
In contrast to this, as mentioned above, we can compute an \emph{optimal} stochastic solution in terms of the aforementioned three metrics in polynomial time.

Recently, Galimberti, Bonchi, and Gullo~\cite{Galimberti+_17} introduced a generalization of the densest common subgraph problem,
which they refer to as the \emph{multilayer densest subgraph problem}.
This problem exploits a trade-off between the minimum density value over layers and the number of layers considered.
They proposed an approximation algorithm using a core decomposition technique for multilayer networks.
Very recently, Hashemi, Behrouz, and Lakshmanan~\cite{Hashemi+22} designed a sophisticated core decomposition algorithm,
which they call the \emph{FirmCore decomposition algorithm}.
Their algorithm finds the set of $(k,\lambda)$-FirmCores for all possible $k$ and $\lambda$ in polynomial time,
where $(k,\lambda)$-FirmCore is a maximal subgraph in which every vertex has degree no less than $k$ in the subgraph for at least $\lambda$ layers.
They demonstrated that the decomposition unfolds a better solution to the multilayer densest subgraph problem than the algorithm by Galimberti, Bonchi, and Gullo~\cite{Galimberti+_17} for many instances.

Semertzidis et al.~\cite{Semertzidis+19} introduced another generalization of the densest common subgraph problem, called the Best Friends Forever (BFF) problem,
in the context of evolving graphs with a number of snapshots.
The BFF problem is a series of optimization problems that maximize an \emph{aggregate density} over snapshots,
where the aggregate density is set to be the average/minimum value of the average/minimum degree of vertices over layers.
They investigated the computational complexity of the problems and designed some approximation or heuristic algorithms.

Recalling that multilayer networks can also be seen as a model of networks with \emph{uncertainty},
we can find some other optimization models related.
Zou~\cite{Zou_13} studied the densest subgraph problem in \emph{uncertain graphs}.
An uncertain graph is a pair of $G=(V,E)$ and $p\colon E\rightarrow [0,1]$, 
where $e\in E$ is present with probability $p(e)$ whereas $e\in E$ is absent with probability $1-p(e)$.
In the problem, given an uncertain graph $G=(V,E)$ with $p$, we seek $S\subseteq V$ that maximizes the expected density.
Zou~\cite{Zou_13} showed that this problem can be reduced to the original densest subgraph problem,
and designed a polynomial-time exact algorithm based on the reduction.
Recently, Tsourakakis et al.~\cite{Tsourakakis+19} introduced a more general optimization problem called the \emph{risk-averse dense subgraph discovery}.

As another example, Miyauchi and Takeda~\cite{Miyauchi_Takeda_18} introduced an optimization problem called the \emph{robust densest subgraph problem}.
In this problem, given an undirected graph $G=(V,E)$ and an edge-weight space $I=\times_{e\in E}[l_e,r_e]\subseteq \times_{e\in E}[0,\infty)$, we are asked to find $S\subseteq V$ that maximizes $\min_{w\in I} \frac{w(S)/|S|}{w(S^*_w)/|S^*_w|}$, where $S^*_w$ is an optimal solution to the densest subgraph problem for $G$ with $w$.
The intuition of this problem is the same as that of ours; this problem also seeks $S\subseteq V$ that is reasonably dense for any $w\in I$.
However, they considered only deterministic solutions and gave a strong hardness result.

As well as dense subgraph discovery, many important primitives for single-layer network analysis have recently been extended to multilayer networks.
Examples include community detection~\cite{Bazzi+16,DeBacco+17,Interdonato+17,Tagarelli+17}, link prediction~\cite{DeBacco+17,Jalili+17}, analyzing spreading processes~\cite{DeDomenico+16,Salehi+15}, and identifying central vertices~\cite{Basaras+19,DeDomenico+15}. 


\section{Model}\label{sec:model}
In this section, we formally define our optimization model. 
Let $G=(V, (E_i)_{i\in [k]})$ be a multilayer network consisting of $k$ layers,
and let $w_i\colon E_i\to\mathbb{R}_{++}$ be a positive edge weight for layer $i$.
We denote by $E$ the union of all edge sets, i.e., $E\coloneqq \bigcup_{i\in [k]} E_i$.
Let $S_i^*$ be a densest subgraph for layer $i$, i.e., $S_i^*\in\argmax_{S\subseteq V}w_i(S)/|S|$.
Our task is to find a vertex subset that is dense for the layer selected adversarially.
As mentioned in the introduction, we consider a stochastic solution to compete with the adversary.
Let $\Delta(2^V)$ be the set of probability distributions over $2^V$.
For each $p \in \Delta(2^V)$, we denote by $p_S$ the probability of choosing $S \subseteq V$.

We aim to compute $p\in \Delta(2^V)$ that maximizes some metric (when the adversary selects the worst layer to $p$).
We employ the following three metrics:

\smallskip
\noindent\textbf{Density.}
\ The first metric is the degree density itself.
Specifically, when we select a vertex subset according to a probability distribution $p \in \Delta(2^V)$ and the adversary selects a layer $i\in[k]$,
our metric is defined as follows:
\begin{align}
   & 
  \mathbb{E}_{S\sim p}\left[ \frac{w_i(S)}{|S|}\right]
  \quad\Bigg(=
  \sum_{S\subseteq V} p_S \frac{w_i(S)}{|S|}\Bigg).
  \label{eq:mindens}
\end{align}
As the adversary selects the worst layer,
we aim to find $p\in \Delta(2^V)$ that maximizes the minimum of the density~\eqref{eq:mindens} among $i\in [k]$.
Our optimization model with this metric can be seen as a stochastic version of the densest common subgraph problem introduced by Jethava and Beerenwinkel~\cite{JB2015}.


\smallskip
\noindent\textbf{Robust ratio.}
\ The \emph{robust ratio} is a metric based on the ratio of the expected degree to the optimal degree. 
For a nonempty $S\subseteq V$ and $i\in[k]$, let us consider a normalized density defined as $\frac{w_i(S)/|S|}{w_i(S_i^*)/|S_i^*|}$.
When we select a vertex subset according to a probability distribution $p\in \Delta(2^V)$ and the adversary selects a layer $i\in [k]$,
the metric is defined as follows:
\begin{align}
  \mathbb{E}_{S\sim p}\left[ \frac{w_i(S)/|S|}{w_i(S_i^*)/|S_i^*|}\right]
  \quad\Bigg(=
  \sum_{S\subseteq V}p_S \frac{w_i(S)/|S|}{w_i(S_i^*)|S_i^*|}\Bigg).
  \label{eq:ratio}
\end{align}
In other words, the robust ratio is equivalent to
the density for the multilayer network with weights $w_1',\dots,w_k'$ given by $w_i'(e)\coloneqq \frac{w_i(e)}{w_i(S_i^*)/|S_i^*|}$ for each $i\in[k]$ and $e \in E_i$.
As the adversary selects the worst layer,
we aim to find $p\in \Delta(2^V)$ that maximizes the minimum of~\eqref{eq:ratio} among $i\in [k]$.
Note that the optimal robust ratio is contained in the interval $[1/k,\,1]$ 
because $p\in \Delta(2^V)$ such that $p_{S_i^*}=1/k$ for each $i\in [k]$ has the objective value of $1/k$.

\smallskip
\noindent\textbf{Regret.}
\ The \emph{regret} is a metric based on the difference between the optimal density and the expected density.
For $S\subseteq V$ and $i\in [k]$, the regret is defined as $w_i(S^*_i)/|S^*_i|-w_i(S)/|S|$.
When we select a vertex subset according to a probability distribution $p\in \Delta(2^V)$ and the adversary selects a layer $i\in [k]$,
the metric is defined as follows:
\begin{align}
  \mathbb{E}_{S\sim p}\left[ \frac{w_i(S_i^*)}{|S_i^*|}-\frac{w_i(S)}{|S|}\right]
  \quad\Bigg(=
  \frac{w_i(S_i^*)}{|S_i^*|}-\sum_{S\subseteq V}p_S \frac{w_i(S)}{|S|}\Bigg).
  \label{eq:regret}
\end{align}
As the adversary selects the worst layer,
we aim to find $p\in \Delta(2^V)$ that minimizes the maximum of~\eqref{eq:regret} among $i\in [k]$.

\smallskip
Here we explain how to select an appropriate metric.
The density and regret metrics are useful when we are concerned with multilayer networks with homogeneous layers 
such as time-dependent follower-followee relations in the Twitter network. 
Although the density metric can be the first choice, the regret metric is more suitable for robust analysis. 
For example, consider the case where there are a number of layers consistent with each other together with some noisy (e.g., random) layers. 
The density metric would suffer from the effect of the noisy layers, but the regret metric would avoid it and find dense subgraphs in the other meaningful layers. 
On the other hand, the robust ratio metric is useful when we analyze multilayer networks with heterogeneous layers 
such as brain networks with structural and functional connectivity layers.
From its definition, the robust ratio metric would find subgraphs that are reasonably dense for all layers. 
The density and regret metrics focus only on the layers with small optimal densities and the layers with large optimal densities, respectively.

\smallskip
\noindent\textbf{Unified concept: $(\bm{\alpha},\bm{\beta})$-density.}
\ Here we introduce a general metric, enabling us to deal with the above three metrics in a unified manner.
An important fact is that the robust ratio and regret metrics can be obtained by affine transformations of the density.
Specifically, when we select a subgraph according to a probability distribution $p \in \Delta(2^V)$ and the adversary selects a layer $i\in[k]$,
we define the \emph{$(\bm{\alpha},\bm{\beta})$-density} using two vectors $\bm{\alpha} \in \mathbb{R}^k_+$ and $\bm{\beta} \in \mathbb{R}^k$ as follows:
\begin{align}
  \mathbb{E}_{S\sim p}\left[\alpha_i\frac{w_i(S)}{|S|}+\beta_i\right]
  \quad\Bigg(=
  \alpha_i\sum_{S\subseteq V}p_S\frac{w_i(S)}{|S|}+\beta_i\Bigg).
  \label{eq:general}
\end{align}
Note that the above three metrics, the density, robust ratio, and regret, are equivalent to
the $(\bm{1},\bm{0})$-density,
$((|S_i^*|/w_i(S_i^*))_{i\in[k]},\bm{0})$-density, and
$(\bm{1},(-w_i(S_i^*)/|S_i^*|)_{i\in[k]})$-density\footnote{The actual regret value is the negation of $(\bm{1},(-w_i(S_i^*)/|S_i^*|)_{i\in[k]})$-density.}, respectively. 
Note that $w_i(S_i^*)/|S_i^*|$ is polynomially computable for each $i\in [k]$~\cite{Charikar2000,Goldberg_84}.
We refer to $(\bm{\alpha},\bm{\beta})$-\dens as the problem of finding $p\in \Delta(2^V)$ that maximizes the minimum of the $(\bm{\alpha},\bm{\beta})$-density among $i\in [k]$.
Therefore, in the following, we aim to design an algorithm for $(\bm{\alpha},\bm{\beta})$-\dens.

\section{Algorithm}\label{sec:algorithm}
In this section, we provide an LP-based polynomial-time exact algorithm for $(\bm{\alpha},\bm{\beta})$-\dens.
Let $((x_e)_{e\in E},(y_v)_{v\in V},t)$ be continuous variables.
We consider the following LP:
\begin{align}
  \begin{array}{rll}
    \text{max.} & t                                                                      &                                   \\[3pt]
    \text{s.t.} & \displaystyle  t\leq \alpha_i\cdot \sum_{e\in E_i}w_{i}(e) x_e+\beta_i & (\forall i\in [k]),               \\[3pt]
                & x_e\le y_u,\ x_e\le y_v                                                & (\forall e=\{u,v\}\in E),         \\[3pt]
                & \displaystyle \sum_{v\in V}y_{v}= 1,                                   &                                   \\[3pt]
                & x_e,y_v\ge 0                                                           & (\forall e\in E, \forall v\in V).
  \end{array}\label{LP:general}
\end{align}
When $(\bm{\alpha},\bm{\beta})=(\bm{1},\bm{0})$,
this formulation coincides with the LP introduced by Jethava and Beerenwinkel~\cite{JB2015} for the densest common subgraph problem.
However, they did not point out the connection between a solution of the LP and a distribution in $\Delta(2^V)$.

Our algorithm first computes an optimal solution to LP~\eqref{LP:general},
denoted by $((\hat{x}_e)_{e\in E},(\hat{y}_v)_{v\in V},\hat{t})$.
Then we set $r_0,r_1,\dots,r_\ell$ to be the reals such that $\{r_0,r_1,\dots,r_\ell\}=\{\hat{y}_v\mid v\in V\}\cup\{0\}$ and $r_0~(=0)<r_1<r_2<\cdots<r_\ell$.
In addition, let $S_j=\{v\in V\mid \hat{y}_v\ge r_j\}$ $(j=1,\dots,\ell)$.
The output of our algorithm is the probability distribution $\hat{p}\in\Delta(2^V)$ defined as
\begin{align}\label{eq:def of p}
&\hat{p}_{S_j} = (r_j-r_{j-1})\cdot |S_j| \ (j \in [\ell]) \  \text{ and } \nonumber  \\
&\hat{p}_S = 0  \text{ for the other $S$'s}.
\end{align}
We remark that $\sum_{S\in 2^V}\hat{p}_S = 1$ holds because
\[1=\sum_{v \in V} \hat{y}_v = \sum_{j\in [\ell]} \sum_{v:\,\hat{y}_v=r_j}r_j = \sum_{j\in [\ell]} r_j(|S_j|-|S_{j+1}|) = \sum_{j\in [\ell]} \hat{p}_{S_j},\]
where $S_{\ell+1}=\emptyset$ for ease of notation.

Our algorithm is formally described in Algorithm~\ref{alg:general}.
Clearly, the algorithm runs in polynomial time.

\begin{algorithm}[t]
  \caption{LP-based algorithm}\label{alg:general}
  Solve LP~\eqref{LP:general} to obtain an optimal solution $((\hat{x}_e)_{e\in E},(\hat{y}_v)_{v\in V},\hat{t})$\;
  Let $r_0,r_1,\dots,r_\ell$ be reals such that $\{r_0,r_1,\dots,r_\ell\}=\{\hat{y}_v\mid v\in V\}\cup\{0\}$ and $r_0~(=0)<r_1<r_2<\cdots<r_\ell$\;
  Let $S_j= \{v\in V\mid \hat{y}_v\ge r_j\}$ $(j=1,\dots,\ell)$\;
  \Return $\hat{p}\in\Delta(2^V)$ with $\hat{p}_{S_j}=(r_j-r_{j-1})\cdot |S_j|$ ($j=1,\dots,\ell$) (and $\hat{p}_{S}=0$ for the other $S$'s).
\end{algorithm}

\section{Analysis}

In this section, we first demonstrate that the output $\hat{p}$ of Algorithm~\ref{alg:general} is optimal to $(\bm{\alpha},\bm{\beta})$-\dens.
Then we analyze the structure of the output $\hat{p}$ with a special attention to its support size.

\subsection{Optimality of the output of Algorithm~\ref{alg:general}}
We first prove that the $(\bm{\alpha},\bm{\beta})$-density 
of $\hat{p}$ is equal to the optimal value of LP~\eqref{LP:general}:
\begin{lemma}\label{lemma:expect}
  It holds that
  \begin{align*}
    \min_{i\in [k]}\mathbb{E}_{S\sim \hat{p}}\Bigl[ \alpha_i\frac{w_i(S)}{|S|}+\beta_i\Bigr]
    = \min_{i\in [k]}\Bigl[\alpha_i\sum_{e\in E_i}w_i(e)\hat{x}_e+\beta_i\Bigr].
  \end{align*}
\end{lemma}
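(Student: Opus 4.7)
The plan is to establish, layer by layer, the identity
\begin{align*}
\mathbb{E}_{S\sim \hat{p}}\!\left[\frac{w_i(S)}{|S|}\right] = \sum_{e=\{u,v\}\in E_i} w_i(e)\cdot \min(\hat{y}_u,\hat{y}_v),
\end{align*}
and then to use LP optimality to bridge the right-hand side to $\sum_{e\in E_i}w_i(e)\,\hat{x}_e$ after the outer minimum over $i$ is taken.

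First, I would unfold the definition of $\hat{p}$. Because $\hat{p}_{S_j}=(r_j-r_{j-1})\,|S_j|$, the factor $|S_j|$ cancels the $1/|S|$ inside the density, yielding $\mathbb{E}_{S\sim\hat{p}}[w_i(S)/|S|]=\sum_{j=1}^{\ell}(r_j-r_{j-1})\,w_i(S_j)$. Swapping the order of summation, a fixed edge $e=\{u,v\}\in E_i$ appears in $w_i(S_j)$ exactly when $r_j\le\min(\hat{y}_u,\hat{y}_v)$, so its total coefficient telescopes to $\min(\hat{y}_u,\hat{y}_v)-r_0=\min(\hat{y}_u,\hat{y}_v)$, using that $\min(\hat{y}_u,\hat{y}_v)$ is itself one of $r_0,r_1,\dots,r_\ell$. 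This proves the identity.

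Second, I would invoke LP optimality to convert $\min(\hat{y}_u,\hat{y}_v)$ into $\hat{x}_e$ inside the outer minimum. The LP constraints $x_e\le y_u,\,y_v$ force $\hat{x}_e\le\min(\hat{y}_u,\hat{y}_v)$, so replacing every $\hat{x}_e$ by $\min(\hat{y}_u,\hat{y}_v)$ while keeping $\hat{y}$ fixed produces another feasible LP solution whose best $t$-value $\hat{t}':=\min_{i}\bigl[\alpha_i\sum_{e\in E_i}w_i(e)\min(\hat{y}_u,\hat{y}_v)+\beta_i\bigr]$ satisfies $\hat{t}'\ge\hat{t}$ (using $\alpha_i\ge 0$). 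By optimality of $\hat{t}$ we must in fact have $\hat{t}'=\hat{t}=\min_i[\alpha_i\sum_{e\in E_i}w_i(e)\hat{x}_e+\beta_i]$, and combining this with the identity from the first step proves the lemma.

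I expect the main delicate point to be the second step: at an arbitrary LP optimum one cannot assert $\hat{x}_e=\min(\hat{y}_u,\hat{y}_v)$ edge-by-edge or even layer-by-layer, yet the expectation on the left-hand side of the lemma effectively evaluates the density against $\min(\hat{y}_u,\hat{y}_v)$. The exchange argument above shows that this edge-level discrepancy washes out once the outer minimum over $i\in[k]$ is taken, which is precisely what the statement requires.
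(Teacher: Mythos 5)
Your proof is correct. The first step (unfolding $\hat{p}$, swapping the order of summation, and telescoping to get $\mathbb{E}_{S\sim\hat{p}}[w_i(S)/|S|]=\sum_{e=\{u,v\}\in E_i}w_i(e)\min\{\hat{y}_u,\hat{y}_v\}$) is exactly the computation in the paper's proof. Where you diverge is in how you bridge $\min\{\hat{y}_u,\hat{y}_v\}$ to $\hat{x}_e$: the paper simply asserts $\hat{x}_e=\min\{\hat{y}_u,\hat{y}_v\}$ at the optimum (a normalization it only states explicitly as a without-loss-of-generality assumption later, in the proof of Theorem~\ref{thm:support}), which yields the stronger layer-by-layer identity $\mathbb{E}_{S\sim\hat{p}}[\alpha_i w_i(S)/|S|+\beta_i]=\alpha_i\sum_{e\in E_i}w_i(e)\hat{x}_e+\beta_i$ for every $i$, and the lemma follows by taking the minimum. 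You instead keep the optimal solution arbitrary and use an exchange argument: raising every $\hat{x}_e$ to $\min\{\hat{y}_u,\hat{y}_v\}$ preserves feasibility and, by $\alpha_i\ge 0$ and optimality of $\hat{t}$, cannot change the value of the outer minimum. Your route buys validity for an arbitrary optimal LP solution (where edges lying only in slack layers may well have $\hat{x}_e<\min\{\hat{y}_u,\hat{y}_v\}$), at the cost of establishing the equality only after the $\min_{i\in[k]}$ is taken rather than per layer; the paper's route is shorter and gives the per-layer identity, but strictly speaking requires the normalization of $\hat{x}$ to be stated up front. Both arguments are sound, and your observation that the discrepancy ``washes out'' only under the outer minimum correctly identifies the one delicate point.
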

\begin{proof}
  To prove the lemma, we show that for any $i\in [k]$,
  \begin{align*}
    \mathbb{E}_{S\sim \hat{p}}\Bigl[ \alpha_i\frac{w_i(S)}{|S|}+\beta_i\Bigr] = \alpha_i\sum_{e\in E_i}w_i(e)\hat{x}_e+\beta_i.
  \end{align*}
  By the definition of $S_j$ $(j=1, \ldots, \ell)$, for each $e =\{u,v\} \in E_i$,  we observe that $e\in E_i[S_j] \iff u, v \in S_j \iff \hat{y}_u\ge r_j\text{ and }\hat{y}_v\ge r_j\iff \hat{x}_e\ge r_j$,
  where the last equivalence follows from $\hat{x}_e = \min\{ \hat{y}_u, \hat{y}_v\}$.

  Recall that $\hat{p}$ is defined as \eqref{eq:def of p}.
  Then we have
  \begin{align*}
    \mathbb{E}_{S\sim \hat{p}}\Bigl[ \alpha_i\frac{w_i(S)}{|S|}+\beta_i\Bigr]
     & =\alpha_i\sum_{j\in[\ell]}\frac{w_i(S_j)}{|S_j|}\cdot \hat{p}_{S_j}+\beta_i                    \\
     & =\alpha_i\sum_{j\in[\ell]} w_i(S_j)\cdot (r_j-r_{j-1})+\beta_i                                 \\
     & =\alpha_i\sum_{j\in[\ell]} \sum_{e\in E_i[S_j]}w_i(e)\cdot (r_j-r_{j-1})+\beta_i               \\
     & =\alpha_i\sum_{j\in[\ell]} \sum_{e\in E_i:\,\hat{x}_e\ge r_j}w_i(e)\cdot (r_j-r_{j-1})+\beta_i \\
     & =\alpha_i\sum_{e\in E_i}w_i(e)\hat{x}_e+\beta_i,
  \end{align*}
  where the second last equality follows from the above equivalence and the last equality follows from the fact that for each $e\in E$, $\hat{x}_e = r_{j'} = \sum_{j=1}^{j'} (r_j - r_{j-1})$ for some $j'$.
\end{proof}

Next, we prove that the optimal value of LP~\eqref{LP:general} gives an upper bound on the optimal value of $(\bm{\alpha},\bm{\beta})$-\dens:
\begin{lemma}\label{lemma:general_lower}
  It holds that
  \begin{align*}
    \min_{i\in [k]}\Bigl[\alpha_i\sum_{e\in E_i}w_i(e)\hat{x}_e+\beta_i\Bigr]\geq
    \max_{p\in\Delta(2^V)}\min_{i\in [k]}\mathbb{E}_{S\sim p}\Bigl[ \alpha_i\frac{w_i(S)}{|S|}+\beta_i\Bigr].
  \end{align*}
\end{lemma}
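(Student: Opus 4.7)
My plan is to establish the inequality by an LP-feasibility argument: for every $p\in\Delta(2^V)$ I will lift $p$ to a feasible solution of LP~\eqref{LP:general} whose value matches $\min_{i\in[k]}\mathbb{E}_{S\sim p}[\alpha_i w_i(S)/|S|+\beta_i]$. Since $\hat{t}$ is LP-optimal, maximizing over $p$ then yields the desired bound.

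The key construction is, given $p$, to set $y_v\coloneqq \sum_{S\ni v} p_S/|S|$ for each $v\in V$ and $x_e\coloneqq \sum_{S:\,e\subseteq S} p_S/|S|$ for each $e\in E$. Intuitively, each nonempty $S$ in the support of $p$ spreads its probability mass uniformly across its $|S|$ vertices. Feasibility is then routine: nonnegativity is immediate; for $e=\{u,v\}$, the inclusion $\{S:u,v\in S\}\subseteq\{S:u\in S\}$ yields $x_e\le y_u$, and symmetrically $x_e\le y_v$; and a double-counting argument gives $\sum_{v\in V} y_v = \sum_{S\neq\emptyset} p_S = 1-p_\emptyset\le 1$.

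Setting $t\coloneqq \min_{i\in[k]}\bigl[\alpha_i\sum_{e\in E_i} w_i(e)\,x_e+\beta_i\bigr]$, I would then swap summations exactly as in the proof of Lemma~\ref{lemma:expect} to obtain $\sum_{e\in E_i} w_i(e)\,x_e = \mathbb{E}_{S\sim p}[w_i(S)/|S|]$ (under the convention $w_i(\emptyset)/|\emptyset|=0$). Hence the value of this feasible LP solution equals $\min_{i\in[k]}\mathbb{E}_{S\sim p}[\alpha_i w_i(S)/|S|+\beta_i]$, as required.

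The only subtle point I anticipate is the edge case $p_\emptyset>0$, which leaves a slack of $p_\emptyset$ in the equality constraint $\sum_v y_v=1$. I would resolve this by simply adding $p_\emptyset$ to $y_{v_0}$ for an arbitrary fixed vertex $v_0$: this preserves every $x_e\le y_u,\,y_v$ constraint because those are one-sided upper bounds, and, since the LP objective depends only on $x$, does not alter $t$. Once this cosmetic adjustment is made, feasibility and the value computation go through unchanged, and taking the supremum over $p\in\Delta(2^V)$ completes the proof.
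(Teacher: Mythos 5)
Your proof is correct and follows essentially the same route as the paper's: lift $p$ to a feasible LP solution via $y_v=\sum_{S\ni v}p_S/|S|$ and $x_e=\sum_{S:\,e\subseteq S}p_S/|S|$, verify feasibility, and match the objective by swapping summations. Your explicit handling of the $p_\emptyset>0$ edge case is a minor refinement the paper glosses over (its verification of $\sum_v \tilde{y}_v=1$ implicitly assumes $p_\emptyset=0$), and your fix is valid.
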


\begin{proof}
  Let us take an arbitrary $\tilde{p} \in \Delta(2^V)$.
  We consider a solution $((\tilde{x}_e)_{e\in E},(\tilde{y}_v)_{v\in V},\tilde{t})$ of LP~\eqref{LP:general} such that
  \begin{align}\label{eq:transform}
    \tilde{x}_e=\!\!\!\!\sum_{\substack{S\subseteq V: \\e\in E[S]}}\frac{\tilde{p}_S}{|S|},\
    \tilde{y}_v=\!\!\!\!\sum_{\substack{S\subseteq V: \\v\in S}}\frac{\tilde{p}_S}{|S|},\
    \tilde{t}=\!\min_{i\in[k]} \Bigl[\alpha_i\!\sum_{e\in E_i}w_i(e)\tilde{x}_e+\beta_i\Bigr].
  \end{align}
  The solution $((\tilde{x}_e)_{e\in E},(\tilde{y}_v)_{v\in V},\tilde{t})$ is feasible for LP~\eqref{LP:general};
  in fact, the only concern is the third constraint but we see that
  \begin{align}
    \sum_{v\in V}\tilde{y}_v
    =\sum_{v\in V}\sum_{\substack{S\subseteq V: \\v\in S}}\frac{\tilde{p}_S}{|S|}
    =\sum_{S\subseteq V}\left(|S|\cdot \frac{\tilde{p}_S}{|S|}\right)
    =1.
  \end{align}
  Hence, the optimal value of LP~\eqref{LP:general} is at least $\tilde{t}$.
  Moreover, we have
  \begin{align*}
    \tilde{t}
     & =\min_{i\in[k]} \Bigl[\alpha_i\sum_{e\in E_i}w_i(e)\tilde{x}_e+\beta_i\Bigr]                       \\
     & =\min_{i\in[k]} \Bigl[\alpha_i \sum_{S\subseteq V}\tilde{p}_S\cdot\frac{w_i(S)}{|S|}+\beta_i\Bigr]
    =\min_{i\in[k]}\mathbb{E}_{S\sim \tilde{p}}\Bigl[\alpha_i\frac{w_i(S)}{|S|}+\beta_i\Bigr].
  \end{align*}
  Recalling that $\tilde{p}$ is taken arbitrarily from $\Delta(2^V)$,
  we see that the optimal value of LP~\eqref{LP:general} is at least that of $(\bm{\alpha},\bm{\beta})$-\dens.
  \qedhere
\end{proof}

Combining Lemmas~\ref{lemma:expect} and \ref{lemma:general_lower}, we have the desired result:
\begin{theorem}
  Algorithm~\ref{alg:general} outputs an optimal solution to $(\bm{\alpha}, \bm{\beta})$-\dens.
\end{theorem}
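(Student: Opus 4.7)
The plan is to chain the two lemmas immediately preceding the statement. Lemma~\ref{lemma:expect} identifies the $(\bm{\alpha},\bm{\beta})$-density value attained by the output $\hat{p}$ with the optimal value of LP~\eqref{LP:general}, and Lemma~\ref{lemma:general_lower} shows that this LP value is at least the optimum of $(\bm{\alpha},\bm{\beta})$-\dens over all stochastic solutions. Since $\hat{p}$ is itself a feasible element of $\Delta(2^V)$, sandwiching forces equality, so $\hat{p}$ must be optimal.

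Concretely, I would proceed in two steps. First, apply Lemma~\ref{lemma:expect} to rewrite the objective attained by $\hat{p}$ as
\begin{equation*}
\min_{i\in[k]}\mathbb{E}_{S\sim\hat{p}}\Bigl[\alpha_i\frac{w_i(S)}{|S|}+\beta_i\Bigr]
=\min_{i\in[k]}\Bigl[\alpha_i\sum_{e\in E_i}w_i(e)\hat{x}_e+\beta_i\Bigr],
\end{equation*}
i.e., the LP optimum $\hat{t}$. Second, apply Lemma~\ref{lemma:general_lower} to bound this LP optimum from below by $\max_{p\in\Delta(2^V)}\min_{i\in[k]}\mathbb{E}_{S\sim p}[\alpha_i w_i(S)/|S|+\beta_i]$, the optimal value of $(\bm{\alpha},\bm{\beta})$-\dens. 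Combining the two, the value $\hat{p}$ achieves is at least the supremum of values achievable by any feasible stochastic solution; as $\hat{p}\in\Delta(2^V)$ is itself one such solution, the reverse inequality is automatic, and equality pins down $\hat{p}$ as optimal.

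There is essentially no obstacle left at the theorem level: all of the substance has been packaged into the two preceding lemmas. The combinatorial identity linking the LP variables to the level-set distribution (Lemma~\ref{lemma:expect}) and the aggregation of an arbitrary $\tilde{p}\in\Delta(2^V)$ into a feasible LP solution via $\tilde{x}_e=\sum_{S\ni e}\tilde{p}_S/|S|$ and $\tilde{y}_v=\sum_{S\ni v}\tilde{p}_S/|S|$ (Lemma~\ref{lemma:general_lower}) already do the heavy lifting. The theorem itself is a one-line composition of these facts and requires no additional machinery.
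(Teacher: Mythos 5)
Your proposal is correct and matches the paper's argument exactly: the paper proves the theorem by combining Lemma~\ref{lemma:expect} and Lemma~\ref{lemma:general_lower} in precisely this sandwiching fashion, with all the substance residing in those two lemmas. Nothing is missing.
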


\subsection{Hierarchical structure and support size of the output of Algorithm~\ref{alg:general}}
Here we observe some useful properties of the output of Algorithm~\ref{alg:general}.
By the design of Algorithm~\ref{alg:general},
we see that the support of the output $\hat{p}$ of the algorithm
(i.e., an optimal solution to $(\bm{\alpha},\bm{\beta})$-\dens) has a hierarchical structure.
We denote the support of $p\in \Delta(2^V)$ by $\supp(p) = \{ S\subseteq V \mid p_S > 0 \}$.
\begin{proposition}\label{prop:chain}
  Algorithm~\ref{alg:general} outputs a hierarchical solution $\hat{p}$,
  i.e., $S\subseteq T$ or $S\supseteq T$ for any $S,T\in\supp(\hat{p})$.
\end{proposition}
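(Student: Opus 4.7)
The plan is to prove the proposition directly from the construction of $\hat{p}$ in Algorithm~\ref{alg:general}, observing that the only candidates for membership in $\supp(\hat{p})$ are the sets $S_1, S_2, \ldots, S_\ell$, and that these form a chain with respect to inclusion.

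First I would unpack the support: by the definition of $\hat{p}$ in~\eqref{eq:def of p}, every set $S$ outside $\{S_1, \ldots, S_\ell\}$ receives probability $0$, so $\supp(\hat{p}) \subseteq \{S_1, \ldots, S_\ell\}$. Hence it suffices to verify that $\{S_1,\ldots,S_\ell\}$ is a chain under set inclusion.

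Next I would show the chain property by a one-line argument using the ordering $r_1 < r_2 < \cdots < r_\ell$. For any indices $j_1 \le j_2$ in $[\ell]$ and any $v \in S_{j_2}$, we have $\hat{y}_v \ge r_{j_2} \ge r_{j_1}$, so $v \in S_{j_1}$; therefore $S_{j_2} \subseteq S_{j_1}$. Consequently, for any $S, T \in \supp(\hat{p})$, writing $S = S_{j_1}$ and $T = S_{j_2}$, either $j_1 \le j_2$ (giving $T \subseteq S$) or $j_2 \le j_1$ (giving $S \subseteq T$), which is exactly the claimed hierarchical property.

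There is essentially no obstacle here: the proposition is an immediate consequence of the fact that the level sets $\{v \in V \mid \hat{y}_v \ge r_j\}$ of a real-valued function, taken over an increasing sequence of thresholds $r_j$, form a nested family. The only mild subtlety is that some $S_j$ could coincidentally equal another $S_{j'}$ (which would not affect the argument) or that $\hat{p}_{S_j}$ might vanish for some $j$, but neither possibility threatens the chain structure of $\supp(\hat{p})$, since any subfamily of a chain is a chain.
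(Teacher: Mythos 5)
Your proof is correct and follows exactly the reasoning the paper intends: the paper states Proposition~\ref{prop:chain} without a separate proof, treating it as immediate "by the design of Algorithm~\ref{alg:general}," and your argument --- that $\supp(\hat{p})\subseteq\{S_1,\dots,S_\ell\}$ and that the level sets $S_j=\{v\in V\mid \hat{y}_v\ge r_j\}$ for increasing thresholds $r_1<\cdots<r_\ell$ are nested --- is precisely that observation made explicit. (Minor remark: since each $r_j$ with $j\ge 1$ is attained by some $\hat{y}_v$, the $S_j$ are in fact pairwise distinct and each has positive probability, so the caveats in your last paragraph never arise, though as you note they would be harmless anyway.)
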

From this proposition, the output $\hat{p}$ has support size at most $|V|-1$ (since the empty set and the singletons are useless).
In addition, if we pick an optimal basic solution to LP~\eqref{LP:general} in Algorithm~\ref{alg:general},
the support size of $\hat{p}$ becomes at most $k$. 
For the definition of a basic solution, see e.g., Vanderbei's book~\cite{vanderbei2020linear}.
\begin{theorem}\label{thm:support}
  If Algorithm~\ref{alg:general} takes a basic optimal solution to LP~\eqref{LP:general},
  its output $\hat{p}$ has support size at most $k$.
\end{theorem}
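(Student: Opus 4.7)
The plan is to argue by contradiction using the standard characterization that a basic feasible solution of an LP admits no nonzero direction along which every tight constraint is preserved as an equality. I would show that if Algorithm~\ref{alg:general} produces $\ell > k$ distinct positive $\hat{y}$-values, then such a nontrivial direction exists, contradicting basicity; since the sets $S_1 \supsetneq \cdots \supsetneq S_\ell$ are all distinct and each carries positive $\hat{p}$-mass, this would give $|\supp(\hat{p})| = \ell \le k$.

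First, I would observe that at any optimum of LP~\eqref{LP:general} we may take $\hat{x}_e = \min(\hat{y}_u, \hat{y}_v)$ for every edge $e = \{u,v\}$, since $x_e$ enters only with non-negative coefficients. Partition the vertices by their $\hat{y}$-value into classes $C_j = \{v : \hat{y}_v = r_j\}$ for $j = 0, 1, \ldots, \ell$ (where $r_0 = 0$), and introduce scalar parameters $(\epsilon_1, \ldots, \epsilon_\ell, \tau)$. Define a candidate direction by $dy_v = \epsilon_j$ for $v \in C_j$ (with $\epsilon_0 := 0$), $dx_e = \epsilon_{\min(j,j')}$ for $e$ joining $C_j$ and $C_{j'}$, and $dt = \tau$. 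A case analysis on which classes contain the endpoints of each edge verifies that every tight constraint of the forms $y_v \ge 0$, $x_e \ge 0$, $x_e \le y_u$, and $x_e \le y_v$ is automatically preserved by this direction, regardless of the values of the $\epsilon_j$'s.

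The remaining tight constraints then reduce to a homogeneous linear system in $\ell + 1$ unknowns: the single equation $\sum_{j=1}^\ell |C_j|\,\epsilon_j = 0$ coming from $\sum_v y_v = 1$, together with $\tau = \alpha_i \sum_{j=1}^\ell a_{ij}\,\epsilon_j$ for each tight layer $i \in K \subseteq [k]$, where $a_{ij}$ denotes the total $w_i$-weight of edges in $E_i$ whose lower endpoint lies in $C_j$. This system has at most $|K| + 1 \le k + 1$ equations in $\ell + 1$ unknowns, so if $\ell > k$ it admits a nonzero solution, contradicting basicity. The main obstacle is executing the case analysis cleanly so that tightness of every upper-bound and nonnegativity constraint is preserved regardless of the $\epsilon_j$'s; once that bookkeeping is in place, the dimension count delivers the bound immediately.
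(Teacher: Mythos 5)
Your proposal is correct, and it reaches the same underlying fact as the paper's proof --- that when there are too many distinct positive $\hat{y}$-values, the matrix of active constraints at $((\hat{x}_e),(\hat{y}_v),\hat{t})$ cannot have full column rank --- but it establishes this by a genuinely different route. The paper counts linearly independent tight constraints directly: it builds spanning forests inside each value class $V_j$, discards constraints that are linear combinations of others (via cycles and paths in those forests), and arrives at the bound $|V|+|E|-\rho+1+k$ on the rank, where $\rho$ is the number of connected components of $E[V_1]\cup\cdots\cup E[V_\ell]$; basicity then forces $\rho\le k$ and hence $\ell\le k$. You instead exhibit an explicit nonzero kernel direction, collapsing the $|V|+|E|+1$ coordinates to the $\ell+1$ parameters $(\epsilon_1,\ldots,\epsilon_\ell,\tau)$ and reducing the tight constraints to a homogeneous system with at most $k+1$ equations; your case analysis of the constraints $x_e\le y_u$, $x_e\le y_v$, $x_e\ge0$, $y_v\ge0$ is sound because $\hat{x}_e=\min\{\hat{y}_u,\hat{y}_v\}$ makes $x_e\le y_u$ tight exactly when $u$'s class index is the smaller one, in which case $dx_e=\epsilon_{\min(j,j')}=dy_u$, and edges touching $C_0$ get $dx_e=\epsilon_0=0$. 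Your argument is arguably cleaner and shorter; what it gives up is the slightly finer information in the paper's count (one degree of freedom per connected component within a value class, rather than one per value class), which is not needed for the stated bound. Two small points to make explicit in a final write-up: the reduction to $\hat{x}_e=\min\{\hat{y}_u,\hat{y}_v\}$ must be justified without destroying basicity (the paper asserts the same normalization ``without loss of generality,'' so this is a shared, not a new, obligation), and you should note that a nonzero $(\epsilon,\tau)$ yields a nonzero direction $(dx,dy,dt)$ because each class $C_j$ with $j\ge1$ is nonempty.
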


\begin{proof}
  Let $((\hat{x}_e)_{e\in E},(\hat{y}_v)_{v\in V},\hat{t})$ be a basic solution.
  Without loss of generality, we may assume that
  \begin{align}\label{eq:opt sol}
    \hat{x}_e=\min\{\hat{y}_u,\hat{y}_v\} \quad (\forall e=\{u, v\} \in E).
  \end{align}
  Recall that $\ell$ denotes the number of different positive values in $\{\hat{y}_v \mid \hat{y}_v>0, \ v\in V\}$.
  Let $V_0 =\{v\in V\mid\hat{y}_v=0\}$.
  We divide $V \setminus V_0$ into $\ell$ subsets of vertices sharing the same value of $\hat{y}_v$, denoted by $V_1, \ldots, V_\ell$.

  Let us focus on the constraints in LP~\eqref{LP:general} that are satisfied with equality.
  For each $j =0,1, \ldots, \ell$, let $F_j \subseteq E[V_j]$ be a spanning forest in $E[V_j]$.
  Let $\rho$ be the number of connected components in $E[V_1] \cup \cdots \cup E[V_\ell]$,
  and let $\zeta$ be that of $E[V_0]$.
  We arbitrarily take $\zeta$ vertices, denoted by $u_1, \ldots, u_\zeta$, one from each connected component in $E[V_0]$.
  We focus on the following constraints (satisfied with equality):
  \begin{align*}
     & \textstyle t=\alpha_i \sum_{e\in E_i}w_{i}(e) x_e+\beta_i  \quad(\forall i\in K'),                                          \\
     & \textstyle x_e= y_u,\ x_e= y_v  \quad(\forall e=\{u,v\}\in \bigcup_{j=0}^\ell E[V_j]),                                      \\
     & \textstyle x_e= y_v             \quad(\forall e=\{u,v\} \in E \setminus \bigcup_{j=0}^\ell E[V_j],\ \hat{y}_v < \hat{y}_u), \\
     & \textstyle \sum_{v\in V}y_{v}= 1,                                                                                           \\
     & \textstyle y_{v}=0\quad (\forall v \in V_0),
  \end{align*}
  where $K' = \{ i \in [k] \mid t= \alpha_i\cdot \sum_{e\in E_i}w_{i}(e) x_e+\beta_i\}$.

  We prove that the coefficient matrix of those constraints is not full-rank.
  For ease of discussion, we remove constraints that are represented by a linear combination of others.
  Specifically, it is enough to focus on the following constraints:
  \begin{align}
     & \textstyle t=\alpha_i\cdot \sum_{e\in E_i}w_{i}(e) x_e+\beta_i\quad  (\forall i\in K'), \label{eq:constraint1}                      \\
     & \textstyle x_e= y_u,\ x_e= y_v \quad(\forall e=\{u,v\}\in \bigcup_{j=0}^\ell F_j), \label{eq:constraint2}                           \\
     & \textstyle x_e= y_v \quad(\forall e=\{u,v\}\in \bigcup_{j=0}^\ell (E[V_j] \setminus F_j),\ u<v),\label{eq:constraint3}              \\
     & \textstyle x_e= y_v \quad(\forall e=\{u,v\}\in E\setminus \bigcup_{j=0}^\ell E[V_j],\ \hat{y}_v < \hat{y}_u),\label{eq:constraint4} \\
     & \textstyle \sum_{v\in V}y_{v}= 1,\quad\label{eq:constraint5}                                                                        \\
     & \textstyle y_{u_i}=0 \quad(\forall i\in[\zeta]).\label{eq:constraint6}
  \end{align}

  There are two types of missing constraints.
  First, let $e=\{u,v\} \in E[V_j] \setminus F_j$ $(j \in [\ell])$ such that $x_e=y_v$ appears in \eqref{eq:constraint3}.
  There exists a cycle $C$ in $F_j \cup \{e\}$.
  By a telescoping sum of the constraints \eqref{eq:constraint2} along $C$, i.e.,
  $y_v = x_{e'}$, $x_{e'}=y_{v'}$, \ldots, $x_{e''}=y_u$,
  we obtain $y_u=y_v$.
  Then, constraint $x_e= y_u$ is obtained by summing $x_e= y_v$ for $y_u=y_v$.
  Next, let $v\in V_0$ belong to a connected component containing $u_j$ $(j \in [\zeta])$.
  By summing \eqref{eq:constraint2} along a path from $v$ to $u_j$, i.e.,
  $y_v = x_{e'}$, $x_{e'}=y_{v'}$, \ldots, $x_{e''}=y_{u_j}$, we obtain $y_v = y_{u_j}$.
  Then, constraint $y_v=0$ is obtained by summing $y_v=y_{u_j}$ and $y_{u_j}=0$ from \eqref{eq:constraint6}.

  The rank of the coefficient matrix is equal to the number of constraints \eqref{eq:constraint1}--\eqref{eq:constraint6}.
  For \eqref{eq:constraint1}, we have at most $k$ constraints;
  the number of constraints \eqref{eq:constraint2} is $2(|V|-\rho-\zeta)$;
  that for \eqref{eq:constraint3} and \eqref{eq:constraint4} is $|E|-|V|+\rho+\zeta$;
  that for \eqref{eq:constraint5} and \eqref{eq:constraint6} is $1+\zeta$.
  Therefore, we have at most $|V|+|E|-\rho+1+k$ constraints.

  We have $|V|+|E|+1$ variables in LP~\eqref{LP:general}.
  If $\rho > k$, then we have at most $|V|+|E|$ constraints, and hence the coefficient matrix of \eqref{eq:constraint1}--\eqref{eq:constraint6} cannot have rank $|V|+|E|+1$.
  As the solution is basic, $\rho \leq k$ must hold.
  Recall that each $V_j$ $(j=1,\ldots, \ell)$ has at least one connected component.
  Therefore, we have at most $k$ different positive values of $y_v$'s, which implies that the output $\hat{p}$ has support size at most $k$.
\end{proof}

\section{Preprocessing}
In this section, we present a simple, scalable preprocessing algorithm, which often reduces the size of the input networks significantly and results in a substantial speed-up. 
Specifically, the algorithm first computes an approximate solution by solving an LP, 
which is much smaller than LP~\eqref{LP:general} in practice, 
and then removes vertices from the original network using the information of the approximate solution obtained. 
We assume that $S^*_i$ for all $i\in [k]$ are known in advance
because they can be computed efficiently  using Charikar's LP-based algorithm~\cite{Charikar2000}
together with the preprocessing algorithm introduced by Balalau et al.~\cite{Balalau+15}.
To describe our algorithm, we introduce some notations. 
For $S\subseteq V$, $v\in S$, and $i\in[k]$, let $d_i(S,v)$ denote the weighted degree of $v$ in the subgraph induced by $S$ in layer $i$, 
i.e., $d_i(S,v)\coloneqq\sum_{e\in E_i[S]:\,v\in e}w_i(e)$. When $S=V$, we simply write $d_i(v)$.

We first describe a fast algorithm for finding an approximate solution for $(\bm{\alpha}, \bm{\beta})$-\dens. 
Specifically, we compute a probability distribution $q\in\Delta(2^V)$ that maximizes the $(\bm{\alpha}, \bm{\beta})$-density 
under the constraint that $q_S=0$ for all $S\in 2^V\setminus \{S_1^*,\dots,S_k^*\}$.
The distribution can be found by solving the following LP:
\begin{align}
\begin{array}{rll}
\text{max.}       & t &\\
\text{s.t.}& \displaystyle t\leq \alpha_i \sum_{j\in[k]} \frac{w_{i}(S_j^*)}{|S_j^*|} q_j+\beta_i  &(\forall i\in [k]),\\
           & \displaystyle \sum_{j\in[k]}q_j = 1,  &\\
           & q_j\ge 0                &(\forall j\in[k]).
\end{array}\label{LP:lower}
\end{align}
Note that this LP has $k+1$ variables and $2k+1$ constraints. 
As $k$ is usually much smaller than $|V|$ and $|E|$, this LP is much smaller than LP~\eqref{LP:general} in practice. 

Next we describe an algorithm for removing vertices using the information of the above approximate solution $q$. 
Let $\ell^*$ be the $(\bm{\alpha},\bm{\beta})$-density of $q$, i.e., the optimal value of LP~\eqref{LP:lower}. 
Note that this is a lower bound on the optimal value of $(\bm{\alpha},\bm{\beta})$-\dens. 
Our algorithm iteratively removes any vertex $v^*$ that satisfies $\max_{i\in[k]}[\alpha_i\cdot d_i(V',v^*)+\beta_i]< \ell^*$, where $V'$ is a remaining vertex set (initially $V'=V$), as long as there exists such a vertex. 
For reference, we describe the procedure in Algorithm~\ref{alg:remove}. 
This algorithm can be implemented to run in $O(k|E|+|V|\log |V|)$ time.
\begin{algorithm}[t] 
\caption{Remove useless vertices}\label{alg:remove}
\SetKwInOut{Input}{Input}
\SetKwInOut{Output}{Output}
\Input{\ $(V,(E_i)_{i\in [k]})$ with $w_1,\dots,w_k$, and $\ell^*\in \mathbb{R}$}
\Output{\ $(V',(E_i[V'])_{i\in [k]})$} 
$V'\ot V$\;
\While{\texttt{True}}{
  Let $v^*\in\argmin_{v\in V'}\max_{i\in[k]}[\alpha_i\cdot d_i(V',v)+\beta_i]$\;
  \If{$\max_{i\in[k]}[\alpha_i\cdot d_i(V',v^*)+\beta_i]\ge \ell^*$}{
      \Return $(V',(E_i[V'])_{i\in [k]})$. 
  }
  \lElse{$V'\ot V'\setminus\{v^*\}$}
}
\end{algorithm}

From now on, we demonstrate that the algorithm does not remove any vertex that is contained in a subset in $\supp(p)$, where $p$ is an arbitrary 
optimal solution to $(\bm{\alpha},\bm{\beta})$-\dens. 
The following is a key lemma in our analysis. 
\begin{lemma}\label{lemma:6.2}
Let $\ell^*$ be a lower bound on the optimal value of $(\bm{\alpha},\bm{\beta})$-\dens.
If $\max_{i\in[k]}[\alpha_i\cdot d_i(v^*)+\beta_i]<\ell^*$, 
then $y_{v^*}=0$ for any optimal solution to LP~\eqref{LP:general}.
\end{lemma}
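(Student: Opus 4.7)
I proceed by contradiction: assume there is an optimal solution $((x_e)_{e\in E},(y_v)_{v\in V},t)$ to LP~\eqref{LP:general} with $y_{v^*}>0$, and construct a strictly better feasible solution. Note first that by Lemmas~\ref{lemma:expect} and~\ref{lemma:general_lower}, the optimal value $t$ of LP~\eqref{LP:general} equals the optimal value of $(\bm{\alpha},\bm{\beta})$-\dens, so $t\ge \ell^*$. Also, $y_{v^*}<1$ must hold: otherwise all other $y_v=0$ forces $x_e=0$ for every $e$, yielding objective $\min_i \beta_i$, which is strictly less than $\ell^*\le t$ because each $\beta_i\le\alpha_i d_i(v^*)+\beta_i<\ell^*$. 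Hence we may safely define the rescaling factor $\lambda:=1/(1-y_{v^*})>1$.

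The key construction is to ``evict'' $v^*$ from the fractional solution and redistribute its mass. Set $y'_{v^*}=0$, $y'_v=\lambda y_v$ for $v\ne v^*$, $x'_e=0$ for every edge $e\ni v^*$, and $x'_e=\lambda x_e$ otherwise. Feasibility for LP~\eqref{LP:general} is routine to verify: nonnegativity is immediate; $\sum_v y'_v=\lambda(1-y_{v^*})=1$; and $x'_e\le y'_u,y'_v$ holds trivially for edges touching $v^*$ and follows from the original constraints scaled by $\lambda$ otherwise.

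The core computation is to bound the new objective value. Let $W_i:=\sum_{e\in E_i}w_i(e)x_e$, so that $\alpha_i W_i+\beta_i\ge t$ for every $i$. Using $x_e\le y_{v^*}$ for every $e\ni v^*$, I bound the ``lost'' mass by $\sum_{e\in E_i:\,v^*\in e}w_i(e)x_e\le y_{v^*}\,d_i(v^*)$. Substituting $\lambda-1=\lambda y_{v^*}$, a short algebraic rearrangement gives
\begin{align*}
\alpha_i\sum_{e\in E_i}w_i(e)x'_e+\beta_i
\ge \lambda\bigl(\alpha_i W_i+\beta_i-y_{v^*}(\alpha_i d_i(v^*)+\beta_i)\bigr).
\end{align*}
Applying $\alpha_i W_i+\beta_i\ge t\ge \ell^*$ together with the hypothesis $\alpha_i d_i(v^*)+\beta_i<\ell^*\le t$ yields a lower bound strictly greater than $\lambda(t-y_{v^*}t)=t$, for every $i\in[k]$. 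Taking the minimum over $i$ contradicts optimality of $t$, so $y_{v^*}=0$ as claimed.

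The main obstacles are (i) verifying that the degenerate case $y_{v^*}=1$ is ruled out by the strict inequality hypothesis (handled above), and (ii) getting the strictness right in the final chain of inequalities: it is essential that $\alpha_i d_i(v^*)+\beta_i<\ell^*$ is \emph{strict} so that after weighting by $y_{v^*}>0$ and subtracting we obtain strict improvement. All other steps are mechanical manipulations of the LP.
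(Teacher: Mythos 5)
Your proof is correct and follows essentially the same route as the paper's: evict $v^*$, rescale the remaining solution by $1/(1-y_{v^*})$, bound the lost edge mass by $y_{v^*}\,d_i(v^*)$ via $x_e\le y_{v^*}$, and combine the strict hypothesis with $t\ge\ell^*$ to produce a strictly better feasible solution, contradicting optimality. Your explicit dismissal of the degenerate case $y_{v^*}=1$ is a minor point the paper leaves implicit, but otherwise the arguments coincide.
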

\begin{proof}
We prove the lemma by contradiction.
We denote by $((\hat{x}_e)_{e\in E},(\hat{y}_v)_{v\in V},\hat{t})$ an optimal solution to LP~\eqref{LP:general}
and let $v^*\in V$ be a vertex that satisfies $\alpha_i\cdot d_i(v^*)+\beta_i<\ell^*$ for all $i\in[k]$.
Suppose for contradiction that $\hat{y}_{v^*}>0$.

We construct a solution $((x_e)_{e\in E},(y_v)_{v\in V},t)$ of LP~\eqref{LP:general} as follows: 
\begin{align*}
x_e&=\begin{cases}
\frac{1}{1-\hat{y}_{v^*}}\cdot\hat{x}_e&(e\not\ni v^*),\\
0        &(e\ni v^*),
\end{cases}\quad
y_v=\begin{cases}
\frac{1}{1-\hat{y}_{v^*}}\cdot\hat{y}_v&(v\ne v^*),\\
0        &(v= v^*),
\end{cases}\\
t&=\min_{i\in[k]}\Bigl[\alpha_i\sum_{e\in E_i}w_i(e)x_e+\beta_i\Bigr].
\end{align*}
It is easy to see that $((x_e)_{e\in E},(y_v)_{v\in V},t)$ is a feasible solution of LP~\eqref{LP:general}.
Moreover, we have
\begin{align*}
t
&=\min_{i\in[k]}\Bigl[\alpha_i\sum_{e\in E_i}w_i(e)x_e+\beta_i\Bigr]\\
&=\min_{i\in[k]}\Bigl[\alpha_i\frac{\sum_{e\in E_i:\,v^*\not\in e}w_i(e)\hat{x}_e}{1-\hat{y}_{v^*}}+\beta_i\Bigr]\\
&=\min_{i\in[k]}\Bigl[\alpha_i\frac{\sum_{e\in E_i}w_i(e)\hat{x}_e-\sum_{e\in E_i:\, v^*\in e}w_i(e)\hat{x}_{e}}{1-\hat{y}_{v^*}}+\beta_i\Bigr]\\
&\ge \min_{i\in[k]}\Bigl[\alpha_i\frac{\sum_{e\in E_i}w_i(e)\hat{x}_e-\sum_{e\in E_i:\, v^*\in e}w_i(e) \hat{y}_{v^*}}{1-\hat{y}_{v^*}}+\beta_i\Bigr]\\
&=\min_{i\in[k]}\Bigl[\alpha_i\frac{\sum_{e\in E_i}w_i(e)\hat{x}_e-d_i(v^*) \hat{y}_{v^*}}{1-\hat{y}_{v^*}}+\beta_i\Bigr]\\
&=\min_{i\in[k]}\frac{(\alpha_i\sum_{e\in E_i}w_i(e)\hat{x}_e+\beta_i)- (\alpha_i\cdot d_i(v^*)+\beta_i)\cdot \hat{y}_{v^*}}{1-\hat{y}_{v^*}}\\
&>\min_{i\in[k]}\frac{(\alpha_i\sum_{e\in E_i}w_i(e)\hat{x}_e+\beta_i)-\ell^*\cdot \hat{y}_{v^*}}{1-\hat{y}_{v^*}}\\
&=\frac{\hat{t}-\ell^*\cdot \hat{y}_{v^*}}{1-\hat{y}_{v^*}}
\ge \frac{\hat{t}-\hat{t}\cdot \hat{y}_{v^*}}{1-\hat{y}_{v^*}}
=\hat{t},
\end{align*}
where the first inequality follows from $\hat{x}_{e} \leq \hat{y}_{v^*}$ for each $e \ni v^*$, 
the second inequality follows from the assumptions $\alpha_i\cdot d_i(v^*)+\beta_i< \ell^*$ and $\hat{y}_{v^*}>0$, and the third inequality follows from Lemma~\ref{lemma:general_lower}. 
This contradicts the optimality of $((\hat{x}_e)_{e\in E},(\hat{y}_v)_{v\in V},\hat{t})$. 
\end{proof}

\begin{theorem}
Let $\ell^*$ be a lower bound on the optimal value of $(\bm{\alpha},\bm{\beta})$-\dens.
Then, any vertex $v^*$ that satisfies $\max_{i\in[k]}[\alpha_i\cdot d_i(v^*)+\beta_i]<\ell^*$ 
is not contained in any subset in the support of any optimal solution to $(\bm{\alpha},\bm{\beta})$-\dens.
\end{theorem}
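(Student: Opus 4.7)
The plan is to lift Lemma~\ref{lemma:6.2}, which speaks about LP variables, to a statement about arbitrary optimal distributions $p\in\Delta(2^V)$ by using the same LP-to-distribution correspondence exploited in Lemma~\ref{lemma:general_lower}.

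First I would fix an arbitrary optimal solution $p^\star \in \Delta(2^V)$ to $(\bm{\alpha},\bm{\beta})$-\dens, and build the LP solution $((\tilde x_e)_{e\in E},(\tilde y_v)_{v\in V},\tilde t)$ associated with $p^\star$ exactly by the formulas~\eqref{eq:transform} in the proof of Lemma~\ref{lemma:general_lower}. That proof already shows this triple is feasible for LP~\eqref{LP:general} and that its objective value $\tilde t$ equals $\min_{i\in[k]} \mathbb{E}_{S\sim p^\star}[\alpha_i w_i(S)/|S|+\beta_i]$, i.e., the optimal value of $(\bm{\alpha},\bm{\beta})$-\dens.

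Next I would combine this with Lemma~\ref{lemma:expect}: the optimal LP value equals the $(\bm{\alpha},\bm{\beta})$-density of the algorithm's output $\hat p$, which in turn equals the optimal value of $(\bm{\alpha},\bm{\beta})$-\dens (by optimality of $\hat p$). Thus $\tilde t$ matches the LP optimum, so $((\tilde x_e)_e,(\tilde y_v)_v,\tilde t)$ is itself an optimal solution to LP~\eqref{LP:general}. Now Lemma~\ref{lemma:6.2} applies directly: for any $v^*$ with $\max_{i\in[k]}[\alpha_i\cdot d_i(v^*)+\beta_i]<\ell^*$, we must have $\tilde y_{v^*}=0$.

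Finally I would read off the conclusion from the definition of $\tilde y_{v^*}$:
\[
0=\tilde y_{v^*}=\sum_{S\subseteq V:\,v^*\in S}\frac{p^\star_S}{|S|}.
\]
Since every summand is nonnegative and $|S|$ is finite for each $S$ with $v^*\in S$, this forces $p^\star_S=0$ for every $S$ containing $v^*$. Equivalently, no $S\in\supp(p^\star)$ contains $v^*$. As $p^\star$ was an arbitrary optimal solution, this proves the theorem. I do not foresee a real obstacle: the entire argument is a routine transfer lemma whose only subtlety is matching the LP value to the $(\bm{\alpha},\bm{\beta})$-\dens optimum via Lemmas~\ref{lemma:expect} and~\ref{lemma:general_lower}, which together certify that the constructed LP solution is optimal and hence eligible to invoke Lemma~\ref{lemma:6.2}.
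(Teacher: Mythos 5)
Your proposal is correct and follows essentially the same route as the paper's own proof: construct the LP solution from an arbitrary optimal distribution via \eqref{eq:transform}, certify its LP-optimality using Lemmas~\ref{lemma:expect} and~\ref{lemma:general_lower}, invoke Lemma~\ref{lemma:6.2} to get $\tilde y_{v^*}=0$, and read off $p^\star_S=0$ for every $S\ni v^*$ from the nonnegativity of the summands.
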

\begin{proof}
Let $v^*$ be any vertex with $\max_{i\in[k]}[\alpha_i\cdot d_i(v^*)+\beta_i]<\ell^*$. 
Let $\hat{p}$ be any optimal solution to $(\bm{\alpha},\bm{\beta})$-\dens.
Construct $((\hat{x}_e)_{e\in E},(\hat{y}_v)_{v\in V},\hat{t})$ from $\hat{p}$ as in \eqref{eq:transform}. 
From Lemma~\ref{lemma:expect} and the proof of Lemma~\ref{lemma:general_lower}, 
we see that $((\hat{x}_e)_{e\in E},(\hat{y}_v)_{v\in V},\hat{t})$ is an optimal solution to LP~\eqref{LP:general}. 
Thus, by Lemma~\ref{lemma:6.2}, we have $\hat{y}_{v^*} = 0$.
By the construction of $\hat{y}_{v^*}$ in \eqref{eq:transform}, $\hat{p}_S=0$ for all $S\subseteq V$ containing $v^*$. 
\end{proof}

This theorem indicates that Algorithm~\ref{alg:remove} does not remove any vertex that is contained in a subset in the support of any optimal solution to $(\bm{\alpha},\bm{\beta})$-\dens.

\section{Experimental Evaluation}\label{sec:experiments}
In this section, we conduct computational experiments using synthetic graphs and real-world networks
to verify the validity of our proposed model and
to evaluate the performance of our proposed algorithms.
All experiments were conducted on a machine equipped with Intel Xeon W 10-core processor and 64GB RAM.
Algorithms were implemented in Python using Gurobi Optimizer 9.0.2.

\subsection{Validity of our model}
Here we aim to verify the validity of our model using synthetic graphs.
To this end, we use a randomly generated multilayer network with a planted clique,
and examine whether an optimal solution detects vertex subsets close to the clique.

We first explain our random procedure for generating multilayer networks.
We produce an (unweighted) random power-law graph as a layer using the Chung--Lu model~\cite{CL2002}, where we first specify an expected degree $d_v$ for each $v\in V$ according to the power-law distribution with exponent $\beta$,
and then connect each pair of vertices $\{u,v\}$ with probability $\frac{d_u\cdot d_v}{\sum_{r\in V}d_r}$.
Note that the graph becomes sparser as the exponent $\beta$ increases.
In this multilayer network, we randomly select a vertex subset $V_c$ with some size,
and plant a clique on $V_c$ (in some specified layers).

To evaluate the performance of optimal solution $p\in \Delta(2^V)$ to $(\bm{\alpha},\bm{\beta})$-\dens in the above multilayer network,
we introduce the following measure, which we refer to as the
(expected) F measure:
\begin{align*}
  \text{F measure} \coloneqq \frac{2\cdot \text{precision}\cdot \text{recall}}{\text{precision}+\text{recall}},
\end{align*}
where
$\text{precision}  \coloneqq\mathbb{E}_{S\sim p}\left[\frac{|S\cap V_c|}{|S|}\right]=\sum_{S\subseteq V}p_S\frac{|S\cap V_c|}{|S|}$ and
$\text{recall}     \coloneqq\mathbb{E}_{S\sim p}\left[\frac{|S\cap V_c|}{|V_c|}\right]=\sum_{S\subseteq V}p_S\frac{|S\cap V_c|}{|V_c|}$.
The F measure approaches to 1 if $p$ tends to be close to $V_c$.

We first investigate the case where a clique is planted in all layers.
We generate $k~(=1,2,3,4,5)$ power-law graphs (i.e., layers) with $\beta=2.3$ on $V$ with $|V|=\text{1,000}$.
Then we randomly select a subset $V_c\subseteq V$ consisting of $10$ vertices, and plant a clique on $V_c$ in all layers.
The performance of optimal solutions to $(\bm{\alpha},\bm{\beta})$-\dens is shown in Figure~\ref{subfig:synthetic1-1}.
As can be seen, for any metric, the F measure is reasonably large for $k\geq 2$,
meaning that our algorithm tends to detect $V_c$ using the information of multiple layers.

\begin{figure}[t]
  \begin{minipage}{.5\linewidth}
    \includegraphics[scale=.32]{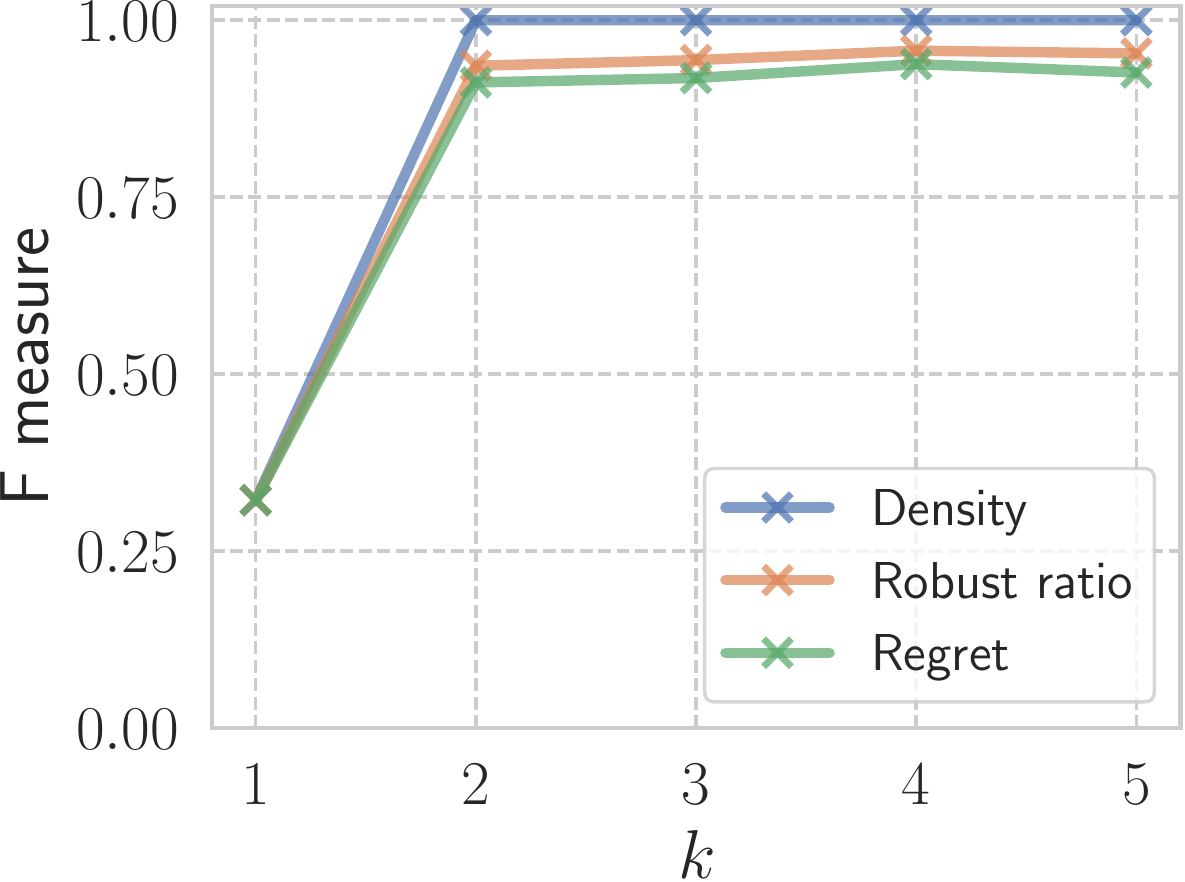}
    \subcaption{Clique in all layers.}\label{subfig:synthetic1-1}
  \end{minipage}%
  \begin{minipage}{.5\linewidth}
    \includegraphics[scale=.32]{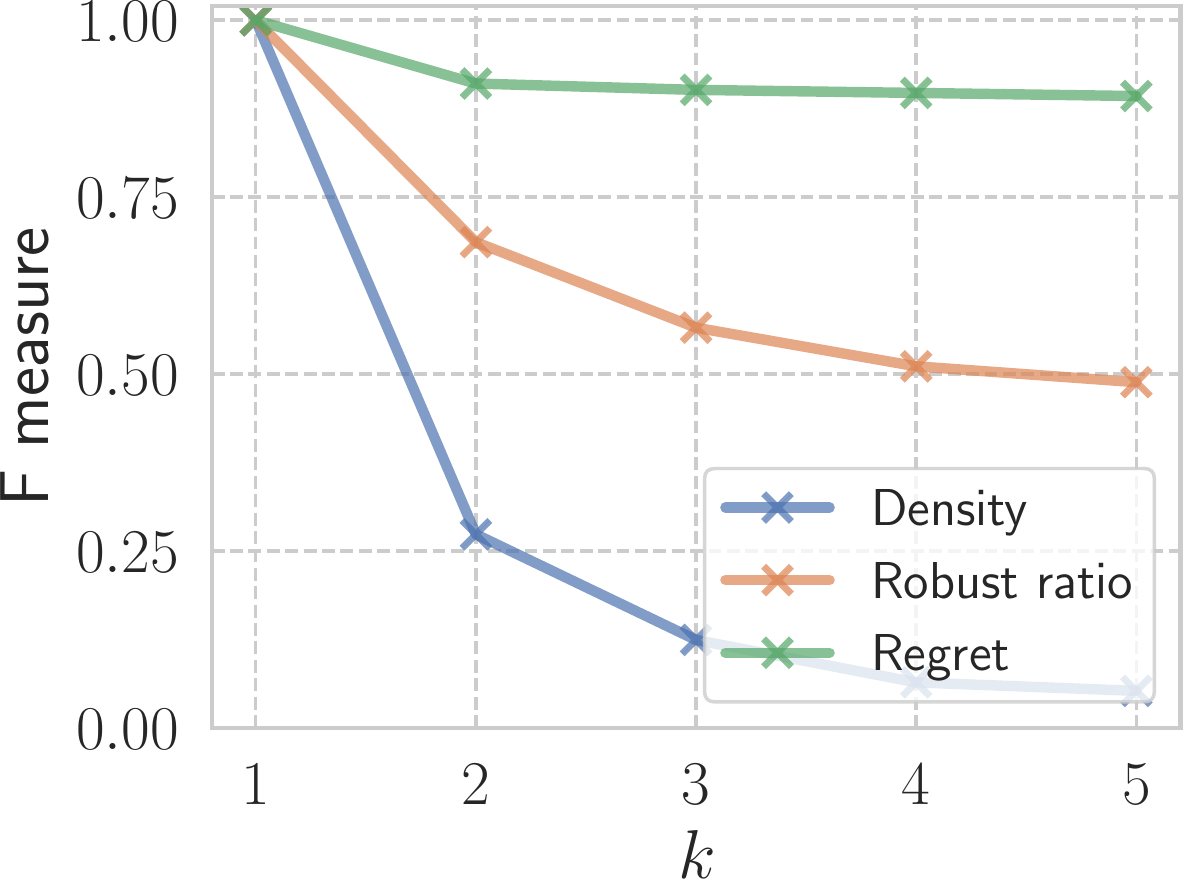}
    \subcaption{Clique in only one layer.}\label{subfig:synthetic1-2}
  \end{minipage}
  \caption{Performance of optimal solutions. Each point corresponds to the average value over 100 network realizations.}\label{fig:synthetic1}
\end{figure}

\begin{figure}[t]
  {
    \begin{minipage}{.5\linewidth}
      \includegraphics[scale=.32]{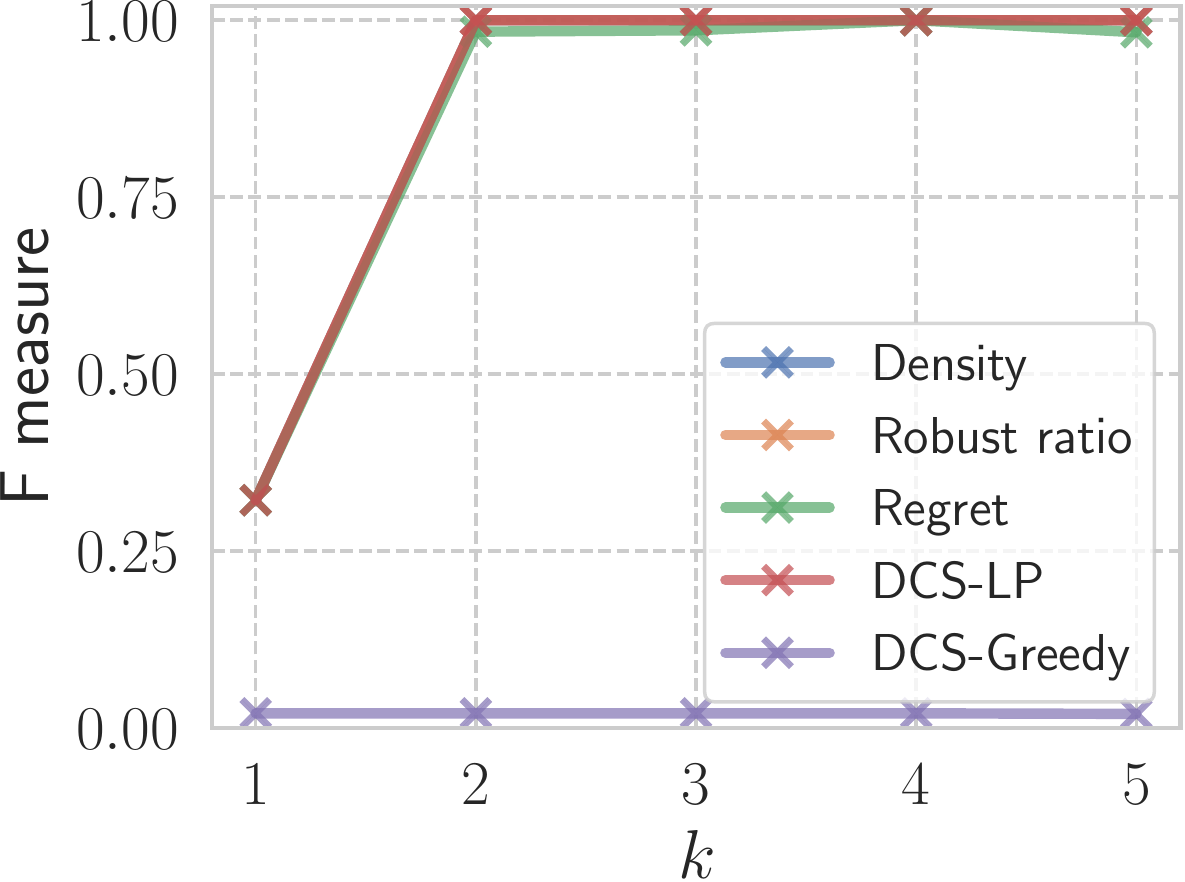}
      \subcaption{Clique in all layers.}\label{subfig:synthetic2-1}
    \end{minipage}%
    \begin{minipage}{.5\linewidth}
      \includegraphics[scale=.32]{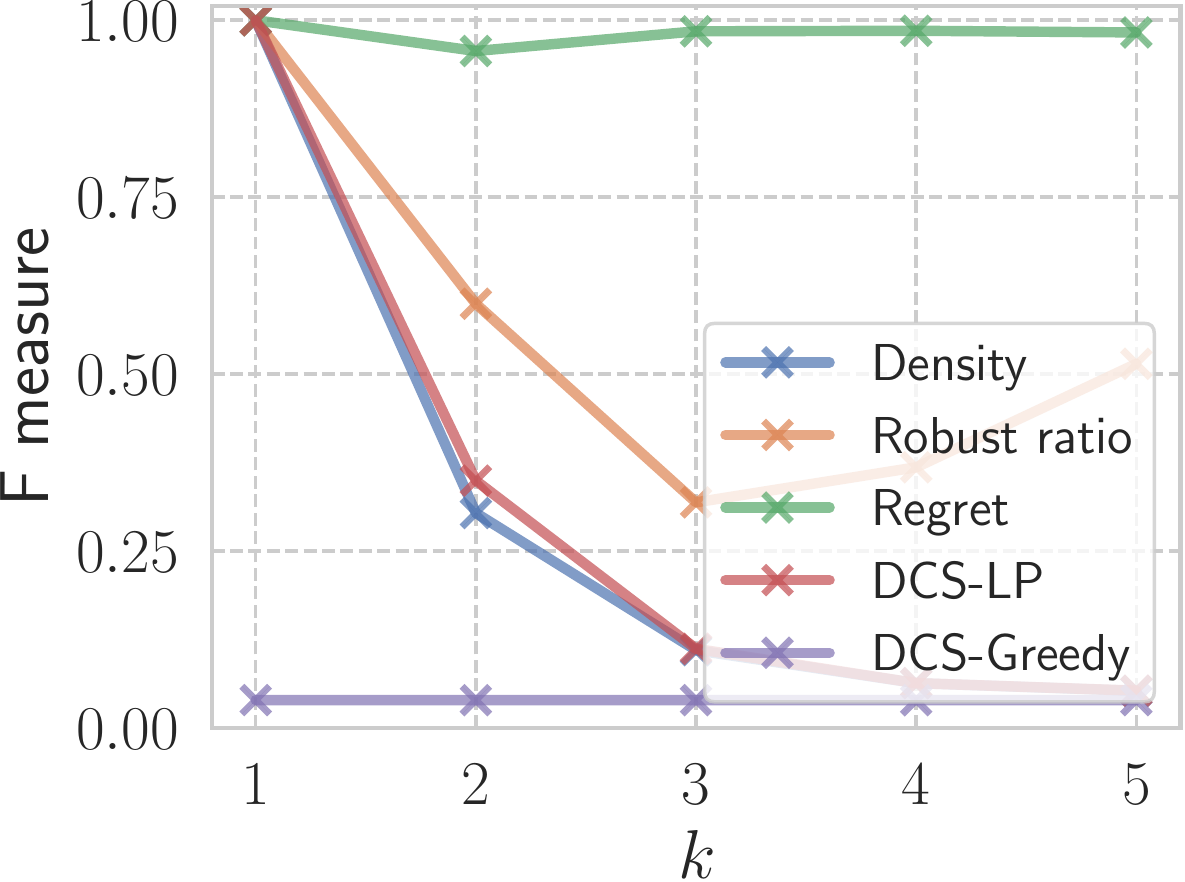}
      \subcaption{Clique in only one layer.}\label{subfig:synthetic2-2}
    \end{minipage}%
  }
  \caption{Performance of single vertex subsets obtained by optimal solutions. The same averaging procedure is applied.}\label{fig:synthetic2}
\end{figure}

\begin{table*}[h!]
  \caption{Performance of our algorithm for real-world datasets.
  OPT and $|\text{supp}|$ denote the optimal value of $(\bm{\alpha},\bm{\beta})$-\dens and the support size of the optimal solution, respectively.
  Note that the objective values for the regret are negated.
  LB indicates the optimal value of LP~\eqref{LP:lower}.
  $|V'|$ and $|E[V']|$, respectively, denote the number of vertices and edges after running Algorithm~\ref{alg:remove}.
  Preprocess time represents the running time of Algorithm~\ref{alg:remove}.
  Total time represents the running time of our proposed algorithm, i.e., Algorithm~\ref{alg:general} together with Algorithm~\ref{alg:remove}.
  For reference, the running time of Algorithm~\ref{alg:general} without Algorithm~\ref{alg:remove} is also presented in the next column.
  The last three columns report the performance of single vertex subsets obtained by optimal solutions.
  For each instance and metric, the best value among the algorithms is in bold.
  }\label{table:computation}
  \setlength{\tabcolsep}{3pt}
  \renewcommand{\arraystretch}{1}
  \scalebox{.78}{
    \begin{threeparttable}
      \begin{tabular}{c|rrr|c|rr|rrrr|r|r|rrr}\toprule
        Dataset                                     & $|V|$   & $|E|$   & $k$ & Metric       & OPT       & $|\text{supp}|$ & LB        & $|V'|$  & $|E[V']|$ & \shortstack{\small Preprocess                                                                                 \\ \small time (s)} & \shortstack{\small Total\\\small time (s)} & \shortstack{\small w/o preprocess\\\small time (s)} & Density  & \shortstack{Robust             \\ratio} & Regret  \\\midrule
                                                    &         &         &     & Density      & 1.1950    & 3               & 1.1313    & 123     & 3,126     & 0.04                          & 0.66     & 0.73     & \textbf{1.1875}  & 0.6579          & -0.6282            \\
        WILDBIRDS                                   & 202     & 4,574   & 6   & Robust ratio & 0.7707    & 4               & 0.7364    & 121     & 2,901     & 0.05                          & 0.77     & 0.79     & 1.0058           & \textbf{0.6970} & \textbf{-0.5065}   \\
        \cite{FS2015}\tnote{$*$}                    &         &         &     & Regret       & -0.4122   & 2               & -0.4473   & 124     & 2,976     & 0.04                          & 0.71     & 0.77     & 0.9129           & 0.6518          & -0.5821            \\\hline
                                                    &         &         &     & Density      & 4.7023    & 2               & 4.4257    & 730     & 6,403     & 1208.79                       & 1304.28  & 1834.88  & \textbf{4.6316}  & 0.6399          & -3.1581            \\
        AS-733                                      & 7,716   & 24,179  & 733 & Robust ratio & 0.7721    & 4               & 0.7295    & 278     & 3,452     & 1165.87                       & 1211.82  & 1906.21  & 4.5789           & 0.7057          & -1.9621            \\
        \cite{LKF2005}\tnote{$\dagger$}             &         &         &     & Regret       & -6.9861   & 4               & -1.8023   & 228     & 3,020     & 1289.15                       & 1329.56  & 1950.33  & 3.6154           & \textbf{0.7317} & \textbf{-1.6316}   \\\hline
                                                    &         &         &     & Density      & 12.0263   & 1               & 12.0263   & 118     & 1,815     & 0.29                          & 0.92     & 8.31     & \textbf{12.0263} & 0.9666          & -0.4382            \\
        Oregon1                                     & 11,492  & 26,461  & 9   & Robust ratio & 0.9808    & 3               & 0.9738    & 110     & 1,702     & 0.48                          & 1.02     & 7.55     & 11.9016          & \textbf{0.9728} & \textbf{-0.3578}   \\
        \cite{LKF2005}\tnote{$\dagger$}             &         &         &     & Regret       & -0.2544   & 3               & -0.3441   & 110     & 1,702     & 0.50                          & 0.95     & 9.72     & 11.8889          & \textbf{0.9728} & \textbf{-0.3578}   \\\hline
                                                    &         &         &     & Density      & 22.2503   & 2               & 21.8228   & 1,130   & 14,166    & 1.90                          & 4.31     & 93.30    & \textbf{18.4000} & 0.1183          & -247.6000          \\
        MoscowAthletics2013                         & 88,804  & 186,846 & 3   & Robust ratio & 0.4709    & 3               & 0.3566    & 105     & 787       & 2.02                          & 2.12     & 112.50   & 13.2000          & \textbf{0.4197} & \textbf{-181.1667} \\
        \cite{ODMA2015}\tnote{$\ddagger$}           &         &         &     & Regret       & -125.4477 & 2               & -130.1752 & 88,804  & 186,846   & 1.40                          & 336.56   & 334.43   & 0.4286           & 0.0186          & -200.8333          \\\hline
                                                    &         &         &     & Density      & 3.5649    & 2               & 3.5077    & 17,551  & 184,659   & 1.98                          & 332.02   & 5466.73  & \textbf{3.5625}  & 0.1768          & -17.0770           \\
        NYClimateMarch2014                          & 102,439 & 329,474 & 3   & Robust ratio & 0.6661    & 3               & 0.5503    & 3,901   & 78,126    & 2.36                          & 147.40   & 6058.12  & 2.3839           & \textbf{0.6652} & -6.8047            \\
        \cite{ODMA2015}\tnote{$\ddagger$}           &         &         &     & Regret       & -2.3052   & 3               & -3.5835   & 102,439 & 329,473   & 1.58                          & 18575.97 & 18909.39 & 1.2785           & 0.3576          & \textbf{-2.3221}   \\\hline
                                                    &         &         &     & Density      & 60.7462   & 2               & 60.7462   & 375     & 4,536     & 11.07                         & 11.72    & 4268.72  & \textbf{52.7143} & 0.0840          & -837.8095          \\
        Cannes2013                                  & 438,537 & 848,017 & 3   & Robust ratio & 0.3633    & 3               & 0.3633    & 246     & 1,617     & 11.53                         & 11.72    & 3981.18  & 42.6000          & \textbf{0.2441} & -365.0667          \\
        \cite{ODMA2015}\tnote{$\ddagger$}           &         &         &     & Regret       & -132.8628 & 2               & -132.8628 & 438,537 & 848,017   & 7.31                          & 4021.19  & 4022.75  & 0.6667           & 0.0073          & \textbf{-155.5000} \\\hline
                                                    &         &         &     & Density      & 1.1891    & 10              & 1.1236    & 191,074 & 559,628   & 18.52                         & 7826.41  & 13787.37 & 0.4146           & 0.0387          & -31.2295           \\
        DBLP                                        & 513,627 & 888,353 & 10  & Robust ratio & 0.1178    & 10              & 0.1125    & 317,231 & 687,335   & 17.01                         & 15323.66 & 22828.17 & \textbf{0.6067}  & \textbf{0.0607} & -20.2759           \\
        \cite{Galimberti+_17}\tnote{$\mathsection$} &         &         &     & Regret       & -11.6944  & 3               & -11.6963  & 513,627 & 888,353   & 14.78                         & 37266.95 & 37085.30 & 0.1045           & 0.0114          & \textbf{-13.4481}  \\
        \bottomrule
      \end{tabular}
      \begin{tablenotes}[para]
        \item[$*$] \url{http://networkrepository.com}
        \item[$\dagger$] \url{http://snap.stanford.edu}
        \item[$\ddagger$] \url{https://comunelab.fbk.eu/data.php}
        \item[$\mathsection$] \url{https://goo.gl/8741Gs}
      \end{tablenotes}
    \end{threeparttable}
  }
\end{table*}

Next we investigate the case where a clique is planted in only one layer.
We generate $k~(=1,2,3,4,5)$ power-law graphs with $\beta=3.0$ on $V$ with $|V|=\text{1,000}$.
Then we randomly select $V_c\subseteq V$ consisting of 20 vertices, but plant a clique on $V_c$ only in one randomly selected layer.
The performance of optimal solutions are described in Figure~\ref{subfig:synthetic1-2}.
As can be seen, the F measure becomes smaller as the number of layers increases.
Among the three metrics, the regret performs particularly well because it concentrates on the layer containing the clique from its definition;
therefore, the regret metric seems most suitable for robust analysis with noisy layers.
The robust ratio performs second best; it also cares about the layer containing the clique.

Finally we conclude this subsection by evaluating vertex subsets that are obtained from optimal solutions $p\in \Delta(2^V)$.
To verify the validity of our model, we select a vertex subset attaining the highest probability in $p\in \Delta(2^V)$.
As our algorithm does not know about $V_c$, it cannot select a vertex subset using the F measure.
For reference, we also run two baseline algorithms, DCS-LP and DCS-Greedy, designed by Jethava and Beerenwinkel~\cite{JB2015}.
Note that DCS-LP can be seen as the algorithm that selects a vertex subset with the largest minimum density value over layers from the support of $p\in \Delta(2^V)$,
where $p\in \Delta(2^V)$ is an optimal solution to our algorithm with the density metric.
The results are depicted in Figure~\ref{fig:synthetic2}, where we employed the same experimental settings as above and used the usual (deterministic) F measure for evaluation.
As can be seen, the trend of our algorithm is similar to that observed in the above experiments;
all metrics almost detect $V_c$ for the all-layers setting, but only the regret metric is successful for the only-one-layer setting.
As for the baseline methods, DCS-LP is comparable to our algorithm with the density metric, while DCS-Greedy performs quite poorly.

\subsection{Performance of our algorithm}
Here we examine the performance of our proposed algorithm using publicly-available real-world multilayer networks.
Table~\ref{table:computation} shows the results together with the summary of the characteristics of the datasets.
Even for large networks,
our algorithm obtains an optimal solution in reasonable time.
The preprocessing algorithm often reduces the size of the networks significantly using a reasonably large lower bound computed by LP~\eqref{LP:lower} 
and results in a substantial speed-up.
As an extreme example, for Cannes2013 with the density and robust ratio metrics, the lower bound attains the optimal value, and the number of vertices is reduced by more than 99.9\%,
which makes the computation more than $300$ times faster.
Consistent with our theoretical analysis, $|\text{supp}|$ is at most $k$. 
For AS-733, $k$ is quite large but $|\text{supp}|$ is still small.

Finally we evaluate the single vertex subsets obtained from optimal solutions.
For $p\in \Delta(2^V)$, we select a vertex subset from $\supp(p)$
that optimizes the metric employed in the algorithm.
Note that in this setting, the output of DCS-LP coincides with that obtained by our algorithm with the density metric.
As DCS-Greedy performed quite poorly, it is omitted.
The results are shown in the last three columns of Table~\ref{table:computation}.
Although there are a few exceptions, the algorithm with a metric performs best
in terms of the metric employed.
A critical fact is that depending only on the density metric, we may fail to obtain meaningful structure from networks.
For example, the algorithm with the robust ratio admits a particularly large robust ratio value of $0.6652$ for NYClimateMarch2014,
meaning that the vertex subset obtained achieves an approximation ratio of $0.6652$ for all layers.
Moreover, the algorithm with the regret metric admits a particularly small regret value of 155.5000 for Cannes2013.
For those instances, the algorithm with the density metric (i.e., DCS-LP) performs poorly in terms of those metrics, respectively.
From the above, it seems quite important to select an appropriate metric depending on the practical purpose at hand.

\section{Conclusion}\label{sec:conclusion}

In this paper, we have introduced a novel optimization model and algorithms for dense subgraph discovery in multilayer networks. 
There are several possible directions for future research.
One direction is to improve the scalability of our algorithm, particularly for the regret metric, 
for which our preprocessing algorithm does not necessarily perform well. 
Another direction is to apply our model and algorithms to some real-world applications of multilayer-network analysis. 
Investigating multilayer-network counterparts of some existing generalizations of the densest subgraph problem (see e.g., \cite{Kawase_Miyauchi_17,Miyauchi_Kakimura_18,Tsourakakis_15,Veldt+21}) is also interesting future work.

\begin{acks}
  This work was partially supported by JST PRESTO Grant Number JPMJPR2122 and
  JSPS KAKENHI Grant Numbers JP17K12646, JP19K20218, JP20K19739, JP21K17708, and JP21H03397.
\end{acks}

\bibliographystyle{ACM-Reference-Format}  
\bibliography{ref}  

\end{document}